\def\version{05 September 2022}	        	%
\pgfplotsset{compat=1.11}
\numberwithin{equation}{section}
\font\tenBbb=msbm10 
\font\sevenBbb=msbm7 
\font\fiveBbb=msbm5 
\def\2{\mathbf 2}
\newcommand{\R}     {\mathbb{R}} 
\newcommand{\N}     {\mathbb{N}}
\def\1{{\mathchoice {1\mskip-4mu\mathrm l}      % Blackboard bold 1 
		{1\mskip-4mu\mathrm l} 
		{1\mskip-4.5mu\mathrm l} {1\mskip-5mu\mathrm l}}} 
\def\comment#1{} 
\newtheoremstyle{thm}{2ex}{2ex}{\itshape\rmfamily}{} 
{\bfseries\rmfamily}{}{1.7ex}{} 
\newtheoremstyle{rem}{1.3ex}{1.3ex}{\rmfamily}{} 
{\itshape\rmfamily}{}{1.5ex}{}
\newtheorem{theorem}{Theorem}[section]
\theoremstyle{definition}
\newtheorem{remark}[theorem]{Remark}
\newcommand{\Xcal}   {{\mathcal X }}
\definecolor{Red}{rgb}{1,0,0}
\begin{document} 
	
	\title[The impact of dormancy on evolutionary branching]
 	{The impact of dormancy on evolutionary branching}
%	{The effects of dormancy on evolutionary branching}
%	{The effects of dormancy for sympatric diversity}
	\author[Jochen Blath, Tobias Paul, András Tóbiás and Maite Wilke Berenguer]{}
	\maketitle
	\thispagestyle{empty}
	\vspace{-0.5cm}
	
	\centerline{\sc Jochen Blath{\footnote{Goethe-Universität Frankfurt, Robert-Mayer-Straße 10, 60325 Frankfurt am Main, {\tt blath@math.uni-frankfurt.de}}}, Tobias Paul{\footnote{HU Berlin, Rudower Chaussee 25, 12489 Berlin, {\tt t.paul@math.hu-berlin.de}}}, András Tóbiás{\footnote{TU Berlin, Straße des 17. Juni 136, 10623 Berlin, {\tt tobias@math.tu-berlin.de}}} and Maite Wilke Berenguer{\footnote{HU Berlin, Rudower Chaussee 25, 12489 Berlin, {\tt mwb@math.hu-berlin.de}}}}
	\renewcommand{\thefootnote}{}
	\vspace{0.5cm}
	
	\bigskip
	
	\centerline{\small(\version)} 
	\vspace{.5cm} 
	
	\begin{quote} 
		{\small {\bf Abstract:}} 
	 	In this paper, we investigate the consequences of dormancy in the `rare mutation' and `large population' regime of stochastic adaptive dynamics. Starting from an individual-based micro-model, we first derive the polymorphic evolution sequence of the population, based on previous work by Baar and Bovier (2018). After passing to a second `small mutations' limit, we arrive at the canonical equation of adaptive dynamics, and state a corresponding criterion for evolutionary branching, extending a previous result of Champagnat and Méléard (2011). 
		
		The criterion allows a quantitative and qualitative analysis of the effects of dormancy in the well-known model of Dieckmann and Doebeli (1999) for sympatric speciation. In fact, a quite intuitive picture merges: Dormancy enlarges the parameter range for evolutionary branching, increases the carrying capacity and niche width of the post-branching sub-populations, and, depending on the model parameters, can either increase or decrease the `speed of adaptation' of populations. Finally, dormancy increases diversity by increasing the genetic distance between subpopulations.

%		
%		This paper introduces dormancy into the ``rare mutation regime'' of stochastic adaptive dynamics. We derive the polymorphic evolution sequence with dormancy as the limit as the population size tends to infinity while simultaneously the rate of mutation tends towards $0$ using the techniques of Baar and Bovier (2018). We further analyze formally the impact of dormancy on evolutionary branching using the methods of Champagnat and Méléard (2011). The importance of dormancy in genetic variability and branching is illustrated with examples building upon the famous model of Dieckmann and Doebeli (1999).
%		We conclude that dormancy favours evolutionary branching and we observe the emergence of new population branches through dormant individuals in the stochastic setting. Further, we observe that the population may occupy larger portions of the trait space when dormancy is involved.
	\end{quote}
	%\vfill
	
	\bigskip\noindent 
	{\it MSC 2010.} 60J85, 92D25.
	
	\medskip\noindent
	{\it Keywords and phrases.} Dormancy, evolutionary branching, adaptive dynamics, sympatric speciation, polymorphic evolution sequence, diversity in trait space.
	%\eject 

	\setcounter{tocdepth}{3}
	
	%\tableofcontents

	\setcounter{section}{0}
	\begin{comment}{
	This is not visible.}
	\end{comment}
	
	\section{Introduction }
	\label{sec-introductionHGT}
	
	{\bf Motivation.} Dormancy is a ubiquituous trait 
	in  (but not restricted to)    microbial communities. It allows individuals to enter a reversible state of reduced metabolic activity for a limited  period of time. While it is known that dormancy   contributes    to the maintenance of microbial diversity (see e.g. \cite{JL10,LdHWB+21}), only few concrete mathematical models   seem to    have been developed to study the impact of dormancy on the evolution of species. Here, we focus on a stochastic modelling approach using the methods of adaptive dynamics, which has been employed  by Champagnat and Méléard \cite{CM11}. However, long before the advent of stochastic adaptive dynamics, the study of deterministic Lotka--Volterra systems   already laid the foundations    for the analysis of   the   evolution and interaction of species with density dependent competition.

	One   early example is the   discrete-time Lotka--Volterra model in \cite{R72}.	There, individuals live on a resource axis -- what is known in adaptive dynamics as the \emph{trait space} -- and individuals at position $x$ on the axis have an equilibrium population size $K(x)$. An interesting question arising   in this context   is the width of the niche occupied by a species, which is the space that the species occupies on the resource axis. Roughgarden found that the niche width depends on the difference of the variance of the carrying capacity and the competition function.   Precisely this   term later appears in \cite{DD99} as a criterion for \emph{evolutionary branching} in a slightly different, continuous time model.
	
	The term 'evolutionary branching' describes the splitting of a monomorphic population into a dimorphic population around a critical trait with a subsequent divergence of the traits away from the branching point. This dimorphism may be interpreted as two branches of the population which are close to each other initially but are driven apart due to  the accumulation of specific (beneficial)  mutations.  For this to happen, usually it is assumed that a monomorphic population first converges towards a critical point which, once the population has arrived, constitutes a local fitness minimum, so that the coexistence of multiple traits in its vicinity becomes possible.  Hence one may observe evolutionary branching. For a more detailed  description and  interpretation  of this mechnism  we refer to \cite[Chapter 2]{D11} and \cite[Section 2.9]{WG05}.   The theory of evolutionary branching in a non-geographic trait space as developed  by \cite{DD99} based on classical results was a milestone of adaptive dynamics modelling for sympatric speciation  -- the emergence of new species without geographic separation between populations.  However, this theory is also   seen as difficult to prove \cite{C07}.

	 Note that  Dieckmann and Doebeli provide a continuous space variant of a model introduced by \cite{CL80} in which the speciation mechanism only depends on the concentration of the competition kernel and the carrying capacity. While this result is not entirely new as we have mentioned above, it was the first time to be stated in the context of evolutionary branching and sympatric speciation. A plethora of further models with many different features which exhibit evolutionary branching have been investigated around the time (cf. \cite{GKMM98, GK00, DHK04, SMJV13} and references therein). However, we are not aware of any models in this area incorporating dormancy.
	
	The assumptions in all these models sparked a discussion on the usefulness, limitations and need of adaptive dynamics as a modelling approach. A particular criticism of Waxmann and Gavrilets in their survey article on adaptive dynamics \cite{WG05} concerns the  (limited)  parameter ranges for sympatric speciation. We will show that the introduction of a competition-induced dormancy mechanism as presented in \cite{BT20} relaxes the criterion for evolutionary branching as long as the strength of the mechanism is of the same order as the competition kernel and the carrying capacity. In particular, we claim that dormancy can favour evolutionary branching and therefore may be seen as a factor contributing to sympatric speciation (within the limitations of the existing adaptive dynamics models, i.e.\ asexual reproduction, homogeneous environment, mostly monomorphic populations etc).

	Moreover, we observe that dormancy can enrich the diversity of microbial communities in numerous other ways besides helping evolutionary branching. Firstly, we  observe larger carrying capacities.   This happens in two ways: due to  dormancy, individuals \color {black} can tolerate higher levels of competition and hence the total population size increases. Furthermore, the introduction of dormancy can help the population to occupy larger portions of the trait space. This occurs through wider branches which we also refer to as wider niches. Secondly, our dormancy mechanism can increase or decrease the speed of adaptation. This is to say that the evolution towards a fitness optimum from a lower fitness level may both be favourably or adversely impacted by dormancy depending on the model parameters. Thirdly, we can show that the (genetic) distance between subpopulations after evolutionary branching can be increased. And lastly, we observe the emergence of alternative  pathways   to dimorphic populations. 
	
	In order to make these claims mathematically sound, we will use the stochastic model of \cite{CM11} as our foundation. The polymorphic evolution sequence (PES) is the result of an additional assumption on rare mutations, i.e.\ a mutant trait fixates before a second mutation emerges. While this assumption prevents clonal interference and may alter the effects of other evolutionary features, it is one of the simplest models for polymorphic populations in the large population limit and has been adopted into several different settings before, such as cancer modelling \cite{BCM+16} or predator-prey dynamics \cite{CHL+16}. The second building stone is the \emph{canonical equation of adaptive dynamics}, which is the small mutation limit of the trait substitution sequence (the monomorphic equivalent of the PES), the latter of which was introduced in \cite{MGM+96}. This equation describes the evolution of the trait over time as a differential equation and has been discussed in  several  different models (cf. \cite{CM11,CMM13}). With these means we obtain a precise mathematical criterion for evolutionary branching.

{\bf Organisation of the paper.} 
	The remainder of the paper is structured as follows:  In Section \ref{sec-model} we present the model and necessary mathematical notation and results: Section 2.1 contains the basic stochastic individual-based model. Section 2.2 provides the corresponding polymorhic evolution sequence in the many particles and rare mutations regime. Section 2.3 considers the small mutation limit and states the evolutionary branching criterion of Champagnat and Méléard in the context of dormancy.  In Section \ref{sec-examples} we then specialize our results to the set-up of Dieckmann and Doebeli, now extended to include dormancy. We are able to provide quantitative results for the parameter ranges leading to evolutionary branching in the presence of dormancy, and also provide simulations to illustrate them (Section 3.1). In Section 3.2, we discuss further effects of dormancy for adaptive diversification, including the speed of adaptation, diversity in trait space, and potential alternative pathways to polymorphic populations. 
	% discussing the effect of competition induced dormancy on evolutionary branching using simulations and our theoretical results.

\section{Model and theoretical results}
\label{sec-model}
	
\subsection{A micro-model for competition-induced dormancy}
We  extend  a stochastic individual-based model with rare mutations introduced by \cite{CM11} to include dormancy.  More specifically, we assign to each individual of our population model a trait $x\in\Xcal$, where $\Xcal\subseteq\R^\ell$ with $\ell\geq 1$ is a compact set. This trait determines the behaviour of the individual. Furthermore, each trait $x$ may exhibit an active and a dormant  state (or phenotype),  which we denote by $(x,a)$ and $(x,d)$ respectively. For each $x,y\in\Xcal$ we introduce the functions
	\begin{itemize}
		\item $\lambda(x)\geq0$ is the rate at which active individuals with trait $x$ reproduce,
		\item $\mu(x)\geq 0$ is the rate at which active individuals with trait $x$ die. We assume $\mu(x)\leq\lambda(x)$ for all $x\in\Xcal$.
		\item $K\in\N$ is the carrying capacity which governs the population size,
		\item $\tfrac{\alpha(x,y)}{K}\geq 0$ is the competition kernel which determines the competition an active individual of trait $y\in\Xcal$ exerts onto an active individual of trait $x\in\Xcal$,
		\item $u_Km(x)\in(0,1]$ is the probability of mutation at birth where $u_K, m(x)\in(0,1]$,
		\item $u_K\phi(x)\in[0,1]$ is the rate of mutation of a dormant individual of trait $x$ with $\phi(x)>0$,
		\item $M((x,t),h)dh$ with $t\in \{a,d\}$ is the mutation kernel determining the law of the mutant trait $x+h\in\Xcal$ when the mutant is born from an individual with trait $x$ or is a mutant dormant individual respectively,
		\item $p(x)\in[0,1]$ is the probability with which an individual of trait $x$ can become dormant when it experiences competition,
		\item $\kappa(x)\geq 0$ is the rate at which dormant individuals of trait $x$ die. Usually, we will assume $\kappa(x)\leq \mu(x)$.
		\item $\sigma(x)>0$ is the rate at which dormant individuals of trait $x$ resuscitate,
	\end{itemize}
	Note that there are no assumptions regarding symmetry of the competition kernel or the mutation kernels. In fact, real world data from \cite{R72}  provide  an example of an asymmetric competition kernel. 
	
	%In the following, we will give theoretical results on the behaviour of this model as the population size tends to infinity and we will supply concrete examples in which we compare and contrast the effect of the new dormancy component to this model.
	
	Our  set-up   is almost a special case of the general model with phenotypic plasticity which was considered by Baar and Bovier in \cite{BB18}. 
	There, mutations can only result from births but not spontaneously, and some rates only depend on the phenotype but not the genotype. However, a simple modification of their arguments can be used to show that their results concerning the limiting behaviour of the population also apply in our case.

	\subsection{Many particles and rare mutations: The polymorphic evolution sequence}
	Under suitable assumptions regarding the behaviour of the mutation rate $u_K$ and the carrying capacity $K$ the dynamics of the population converge as $K\to\infty$ towards a pure jump Markov chain. This is due to the so-called ``rare mutations  regime'' which allows for a mutant trait to establish itself in the population before another mutation emerges. We will briefly sketch the way to our result. Denote by $N(t)$ the total population size at time $t\geq 0$ and let $N_{x,r}(t)$ with $r\in\{a,d\}$ denote the active respectively dormant population size of trait $x$ at time $t$. As $K\to\infty$, the population sizes will tend to infinity as well, therefore we consider the rescaled population and count how often each trait is represented, %by going over all individuals 
	where we denote the trait of the $i$-th individual at time $t$ and its phenotype by $z_i(t)$, i.e. $z_i(t)=(x,r)$ for some $x\in\Xcal$ and $r\in\{a,d\}$. Then we can write \[
	\nu_t^K=\frac{1}{K}\sum_{i=1}^{N_t}\delta_{z_i(t)}
	\]
	which counts the number of active/dormant individuals of each trait. If the trait space is finite, say $\Xcal=\{x_1,\ldots,x_k\}$, then we can represent $\nu_t^K$ as the vector \[
	\nu_t^K=\frac{1}{K}\begin{pmatrix}
		N_{x_1,a}(t), N_{x_1,d}(t), \cdots, N_{x_k,a}(t), N_{x_k,d}(t)
	\end{pmatrix}.
	\]
	In the absence of mutations, it is well known from \cite[Theorem 11.2.1]{EK86} that the dynamics of $N_{x_i,r}(t)/K$ for a finite number of traits $\{x_1,\ldots,x_k\}\subseteq\Xcal$ converges on finite time intervals as $K\to\infty$ to the solution $n_{x_i,r}(t)$ of the dynamical system
	\begin{equation}\begin{aligned}\label{eq: DynSystem}
			\dot{n}_{x_i,a}(t)&=n_{x_i,a}(t)\left(\lambda(x_i)-\mu(x_i)+\sigma(x_i) n_{x_i,d}(t)-\sum_{j=1}^{k}\alpha(x_i,x_j)n_{x_j,a}(t)\right)\\
			\dot{n}_{x_i,d}(t)&=p(x_i)n_{x_i,a}(t)\sum_{j=1}^{k}\alpha(x_i,x_j)n_{x_j,a}(t)-n_{x_i,d}(t)\left(\kappa(x_i)+\sigma(x_i)\right).
	\end{aligned}\end{equation}
	We say that the traits $x_1,\ldots,x_k$ can \emph{coexist}, if this system admits a coordinatewise positive and locally stable equilibrium. Denote this equilibrium by $\bar{n}=(\bar{n}_{x_1,a},\bar{n}_{x_1,d},\ldots,\bar{n}_{x_k,a},\bar{n}_{x_k,d})$. Then we know from \cite[Chapter V]{AN72} that the probability of an invading mutant trait $y\in\Xcal$ to go extinct is given by the coordinate $q_a$ for an initially active mutant and the coordinate $q_d$ for an initially dormant mutant 
  corresponding to the solution of  	
%	respectively of the solution of 
	\begin{align*}
		\lambda(y)(q_a^2-q_a)+p(y)(q_d-q_a)\sum_{j=1}^{k}\alpha(y,x_j)\bar{n}_{x_j,a}+\left(\mu(y)+(1-p(y))\sum_{j=1}^{k}\alpha(y,x_j)\bar{n}_{x_j,a}\right)(1-q_a)&=0\\
		\sigma(y)(q_a-q_d)+\kappa(y)(1-q_d)&=0
	\end{align*}
	in $[0,1]^2\setminus\{(1,1)\}$. Denote this unique solution by $q_a(y,\mathbf{x})$ for an active and similarly $q_d(y,\mathbf{x})$ for an initially dormant mutant of trait $y$. Now, we can formulate the convergence result for $\nu_t^K$.
	
	\begin{theorem}\label{Theorem: PESD}
		Fix traits $x_1,\ldots,x_k\in\Xcal$ and assume that they can coexist. Further, assume that at time $t=0$, these are the only existing traits and their population size converges as $K\to\infty$ to their equilibrium, i.e. 
		$$
		\lim\limits_{K\to\infty}\nu_0^K=\sum_{j=1}^k\bar{n}_{x_j,a}\delta_{x_j,a}+\bar{n}_{x_j,d}\delta_{x_j,d}.
		$$ 
		Moreover, assume that the mutation parameter satisfies 
		\[
		\exp(-VK)\ll u_K\ll\frac{1}{K\log(K)}\quad\text{for all }V>0\text{ as } K\to\infty,
		\]
		and that $\lambda, \alpha, \mu,\kappa$ and $\sigma$ are bounded. Lastly, assume that the solution of the dynamical system \eqref{eq: DynSystem} for coexisting traits $x_1,\ldots,x_k$ and an invading new trait $x_{k+1}\in\Xcal$ converges towards a unique locally strongly stable equilibrium $n^*(x_1,\ldots,x_k,x_{k+1})$ from any sufficiently small neighbourhood of the initial condition $(\bar{n},0,0)$. 
		
		Then the process $(\nu_{t/(Ku_K)}^K)_{t\geq 0}$ with initial condition $\nu_0^K$ converges in the sense of finite dimensional distributions  on the space of finite measures on $\Xcal\times\{a,d\}$ equipped with the weak topology as $K\to\infty$ to the jump process  $(Z_t)_{t \ge 0}$  with transitions from \[
		\sum_{j=1}^{k}\bar{n}_{x_j,a}\delta_{x_j,a}+\bar{n}_{x_j,d}\delta_{x_j,d}\qquad\text{to}\qquad \sum_{j=1}^{k+1} {n}^*_{x_j,a}\delta_{x_j,a}+n^*_{x_j,d}\delta_{x_j,d}
		\]
		at rate \begin{align*}
			&\sum_{j=1}^k\lambda(x_j)m(x_j)\bar{n}_{x_j,a}(1-q_a(x_{k+1},\mathbf{x}))M((x_j,a),x_{k+1}-x_j)\\
			+&\sum_{j=1}^k\phi(x_j)\bar{n}_{x_j,d}(1-q_d(x_{k+1},\mathbf{x}))M((x_j,d),x_{k+1}-x_j)
		\end{align*}
		and initial condition \[
		Z_0=\lim\limits_{K\to\infty}\nu_0^K=\sum_{j=1}^k\bar{n}_{x_j,a}\delta_{x_j,a}+\bar{n}_{x_j,d}\delta_{x_j,d}.
		\]
	\end{theorem}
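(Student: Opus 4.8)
The plan is to follow the by-now standard three-phase scheme used to derive polymorphic evolution sequences (Champagnat--Méléard \cite{CM11}, Baar--Bovier \cite{BB18}), adapting each step to the active/dormant phenotype structure. Everything hinges on the separation of time scales built into the hypothesis $\exp(-VK)\ll u_K\ll 1/(K\log K)$: after the time change by $Ku_K$, successful mutations arrive at an $O(1)$ rate, while the deterministic relaxation of the resident population and the invasion-and-fixation of a single mutant clone each last only $O(\log K)$ real time; during such a window a second mutation is negligibly unlikely (here $u_K\ll 1/(K\log K)$ enters), and the resident population does not leave a small neighbourhood of its equilibrium through a large-deviation event (here $u_K\gg e^{-VK}$ enters). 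By the Markov property it therefore suffices to analyse one transition starting from a configuration in which $x_1,\dots,x_k$ are present near $\bar n$ and nothing else.

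\emph{Resident and invasion phases.} As long as no mutant is present, \cite[Theorem 11.2.1]{EK86} together with a Freidlin--Wentzell exit estimate for density-dependent Markov chains keeps $\nu^K$ within $\eps$ of the stable equilibrium $\bar n$ of \eqref{eq: DynSystem} for times up to $e^{cK}$, $c=c(\eps)>0$. While the residents sit near $\bar n$, active $x_j$-individuals give birth at rate $\lambda(x_j)\bar n_{x_j,a}K(1+o(1))$, each birth producing with probability $u_Km(x_j)$ a mutant $x_j+h$, $h\sim M((x_j,a),\cdot)$, and dormant $x_j$-individuals turn into mutants at rate $u_K\phi(x_j)\bar n_{x_j,d}K(1+o(1))$ with displacement law $M((x_j,d),\cdot)$; after the time change these two families of mutation events arrive at rates $\lambda(x_j)m(x_j)\bar n_{x_j,a}M((x_j,a),\cdot)$ and $\phi(x_j)\bar n_{x_j,d}M((x_j,d),\cdot)$. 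A freshly arisen mutant of trait $y=x_{k+1}$ sees a frozen environment while its size is $\ll\eps K$, so the pair $(N_{y,a},N_{y,d})$ is well approximated, via a coupling as in \cite[Section 4]{CM11}, by the two-type branching process in which an active particle branches at rate $\lambda(y)$, dies at rate $\mu(y)$, and at rate $\sum_j\alpha(y,x_j)\bar n_{x_j,a}$ undergoes a competition event that kills it with probability $1-p(y)$ and makes it dormant with probability $p(y)$, while a dormant particle dies at rate $\kappa(y)$ and becomes active at rate $\sigma(y)$. By the theory of multitype branching processes \cite[Chapter V]{AN72}, the extinction probability vector $(q_a,q_d)$ of this process is the minimal fixed point in $[0,1]^2$ of the offspring generating-function equation, which after simplification is exactly the system stated just before the theorem; supercriticality --- equivalently $(q_a,q_d)\ne(1,1)$, equivalently positivity of the invasion fitness --- produces the weights $1-q_a(x_{k+1},\mathbf x)$ and $1-q_d(x_{k+1},\mathbf x)$ in the claimed jump rate, since a clone that fails does so within $O(\log K)$ time leaving no macroscopic trace, whereas on the complementary event it reaches size $\eps K$ within $O(\log K)$ time with probability $1-o(1)$.

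\emph{Fixation phase and conclusion.} Once $x_{k+1}$ is macroscopic, \cite[Theorem 11.2.1]{EK86} applies to the $(k+1)$-type system \eqref{eq: DynSystem}, which by the last hypothesis converges from the relevant neighbourhood of $(\bar n,0,0)$ to the stable equilibrium $n^*(x_1,\dots,x_{k+1})$; any trait with vanishing $n^*$-coordinate is then driven to extinction in $O(\log K)$ further time by comparison with a subcritical branching process. Concatenating the three phases --- each of duration $O(\log K)=o\big(1/(Ku_K)\big)$ --- and using that the probability of a second mutation in such a window is $O(Ku_K\log K)=o(1)$, one obtains convergence of the one-step transition; the Markov property and induction on the successive jumps then upgrade this to convergence of the finite-dimensional distributions of $(\nu^K_{t/(Ku_K)})_{t\ge0}$ to the jump process $(Z_t)$ described in the statement.

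The main obstacle is that the present model is not literally an instance of \cite{BB18}: dormant individuals mutate spontaneously (the $u_K\phi$ term), and several rates depend on the genotype $x$ rather than only on the phenotype. One must therefore recheck two ingredients in the slightly enlarged setting --- (i) the mutation-free law of large numbers and the large-deviation exit estimate in the presence of the extra $u_K\phi$-transitions, and (ii) the two-type branching approximation for a nascent clone, together with the claim that the displayed system indeed characterises its extinction probability and admits a unique solution in $[0,1]^2\setminus\{(1,1)\}$ (existence and uniqueness following from standard multitype branching theory once one notes that the active/dormant type graph is irreducible, since $\sigma(y)>0$, so the process is either sub- or supercritical). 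Apart from this, the estimates are routine adaptations of \cite{CM11,BB18}.
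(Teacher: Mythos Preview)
Your proposal is correct and follows exactly the route the paper takes: the paper's proof is the single line ``The proof consists of a straightforward modification of the proof of \cite[Theorem~3.6]{BB18},'' and you have spelled out precisely this modification --- the three-phase scheme, the two-type branching approximation for the active/dormant mutant clone, and the two deviations from \cite{BB18} (spontaneous mutation of dormant individuals via $u_K\phi$, and genotype-dependent rates) that need to be rechecked. In fact your sketch is considerably more detailed than what the paper provides.
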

	\begin{proof}
		The proof consists of a  straightforward modification  
%		suitable modification 
		of the proof of \cite[Theorem 3.6]{BB18}.%\marginpar{So wie der Satz formuliert ist, könnte man meinen, es gibt nur einen Übergang...}
	\end{proof}
		
\subsection{The small mutation limit and a criterion for evolutionary branching}	
	We now turn towards  obtaining a  scaling limit as the allowed radius of mutation is scaled down to $0$. For this, we need to assume that the trait space $\Xcal\subseteq\R^\ell$ is convex. Further, we introduce the scaling parameter $\varepsilon>0$ into the mutation kernel by  defining the  mutant trait by $y=x+\varepsilon Y$ where now $Y$ is distributed according to $M((x,r),h)dh$. In addition, we assume that there are no mutations in the dormant population. Then, letting $K\to\infty$, we obtain a polymorphic evolution sequence dependent on $\varepsilon$, which we call  $Z^\varepsilon=(Z_t^\varepsilon)_{t \ge 0}$, and subsequently letting $\varepsilon\to 0$ gives us convergence of $Z^\varepsilon$ to a deterministic limit. However, we need a new scaling of time to account for the smaller  size of mutations. 
	
	 As in \cite[Section 4]{CM11} we will assume a monomorphic population. Then we can associate the polymorphic evolution sequence $(Z^\varepsilon_t)_{t\geq 0}$ with a jump Markov process $(X^\varepsilon_t)_{t\geq 0}$ via $Z^\varepsilon_t=\bar{n}_{X^\varepsilon_t,a}\delta_{X^\varepsilon_t,a}+\bar{n}_{X^\varepsilon_t,d}\delta_{X^\varepsilon_t,d}$. The process $(X^\varepsilon_t)_{t\geq 0}$ now takes values in $\R^\ell$ and describes the trait of the population at time $t$. 
	\begin{theorem}
	\label{Theorem: CEAD}
		Under the assumptions of Theorem \ref{Theorem: PESD} assume that the processes $(Z^\varepsilon_t)_{t\geq 0}$ have a monomorphic initial condition 
		$$
		Z^\varepsilon_0=\bar{n}_{x,a}\delta_{x,a}+\bar{n}_{x,d}\delta_{x,d},\quad\text{i.e.}\quad X^\varepsilon_0=x
		$$ 
		for some $x\in\Xcal$. Also, assume that  the mutation rates satisfy  $\phi(x)=0$ for all $x\in\Xcal$,  and that the function \[g(y,x)=\lambda(x)m(x)\bar{n}_{x,a}(1-q_a(y,x))\] is continuous on $\Xcal^2$ and continuously differentiable with respect to $y$. Furthermore, let the map $x\mapsto M((x,a),h)dh$ be Lipschitz with respect to the Wasserstein metric on the set of probability measures and let $M((x,a),h)$ have finite and bounded in $x$ third-order moments. Then $(X^\varepsilon_t)_{t\geq 0}$ converges weakly in the space of càdlàg paths $\mathbb{D}([0,T],\R^\ell)$ on any finite time interval $[0,T]$ with $T>0$ to a deterministic function $x(t)$ 
		with $x(0)=x$ and which solves the   following  canonical equation of adaptive dynamics
		\[
		\frac{dx(t)}{dt}=\int_{\R^\ell} h[h\cdot\nabla_1g(x(t),x(t))]_+M((x(t),a),h)dh.
		\]

	\end{theorem}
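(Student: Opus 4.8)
The plan is to follow the route by which Champagnat and Méléard derive the canonical equation from the trait substitution sequence in \cite[Section~4]{CM11}, inserting the modifications forced by dormancy. First, under the monomorphic assumption, Theorem~\ref{Theorem: PESD} with $k=1$ realises $(X^\varepsilon_t)_{t\ge0}$ as a pure-jump Markov process on $\Xcal$: from $x$ it jumps to $x+\varepsilon h$ at rate $g(x+\varepsilon h,x)\,M((x,a),h)\,\d h$, the increment $\varepsilon h$ being distributed via $M((x,a),\cdot)$ and the factor $1-q_a(x+\varepsilon h,x)$ inside $g$ being the probability that the mutant invades and (by the standing invasion-implies-fixation hypothesis contained in ``monomorphic'') fixes, so that the population returns to the monomorphic equilibrium at the new trait. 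Since a successful mutation shifts the trait by $O(\varepsilon)$ at rate $O(\varepsilon)$, I would run $X^\varepsilon$ on the time scale $t\mapsto t/\varepsilon^2$ (this is the rescaling alluded to before the statement), on which its extended generator acts on $\varphi\in C^2_b(\R^\ell)$ by
\[
\mathcal A^\varepsilon\varphi(x)=\frac{1}{\varepsilon^{2}}\int_{\R^\ell}\bigl(\varphi(x+\varepsilon h)-\varphi(x)\bigr)\,g(x+\varepsilon h,x)\,M((x,a),h)\,\d h .
\]

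The core step is to prove that $\mathcal A^\varepsilon\varphi\to\mathcal A\varphi$ uniformly on $\Xcal$, where $\mathcal A\varphi(x)=F(x)\cdot\nabla\varphi(x)$ and $F(x)=\int_{\R^\ell}h\,[\,h\cdot\nabla_1 g(x,x)\,]_+\,M((x,a),h)\,\d h$ is exactly the right-hand side of the canonical equation. I would combine two expansions: Taylor's formula $\varphi(x+\varepsilon h)-\varphi(x)=\varepsilon\,h\cdot\nabla\varphi(x)+\tfrac12\varepsilon^2 h^{\!\top}D^2\varphi(\xi_\varepsilon)h$, and the asymptotics $\varepsilon^{-1}g(x+\varepsilon h,x)\to[\,h\cdot\nabla_1 g(x,x)\,]_+$ (locally uniformly in $x$, with the bound $g(x+\varepsilon h,x)\le C\varepsilon|h|$). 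Two structural facts make the limit take this form: $g(x,x)=0$ — at the resident equilibrium an invading copy of the resident trait forms a critical (two-type, active/dormant) branching process, so $1-q_a(x,x)=0$, as in the classical case — and $g\ge0$ (since $q_a\in[0,1]$), which together force the linearisation to appear through a positive part rather than a plain gradient, the vector $\nabla_1 g(x,x)$ in the statement being the gradient of the smooth ``fitness'' envelope of $g$ at the diagonal. Here lies the one genuinely dormancy-specific ingredient: $1-q_a(y,x)$ is now the survival probability of the \emph{two-type} branching process whose extinction probabilities $(q_a,q_d)$ solve the pair of equations displayed just before Theorem~\ref{Theorem: PESD}, and I would obtain the first-order expansion by a perturbation analysis of that system around its trivial solution at $y=x$ — where its linearisation is exactly critical, which is precisely why $1-q_a$ depends linearly, not smoothly, on the invasion fitness — producing a proportionality constant that is continuous in $x$ because $\lambda,\mu,\kappa,\sigma,\alpha$ are bounded and continuous. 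Plugging both expansions into $\mathcal A^\varepsilon\varphi$, the $g(x,x)$-term drops, the cross term is $O(\varepsilon)$ after division by $\varepsilon^2$ and is integrable by the assumed finite and uniformly bounded third moments of $M((x,a),\cdot)$, and the principal term converges to $F(x)\cdot\nabla\varphi(x)$; dominated convergence with $|h|$- and $|h|^2$-type majorants (integrable by the moment bound) lets me pass the limit under the integral uniformly in $x$.

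Finally, I would upgrade generator convergence to path convergence and identify the limit. Using the assumed Wasserstein-Lipschitz dependence of $M((x,a),\cdot)$ on $x$ and the $C^1$-in-$y$, Lipschitz-in-$x$ regularity of $g$ (again via the fixed-point system and the smoothness of the rates), $F$ is Lipschitz on $\Xcal$, so the canonical equation has a unique solution $x(\cdot)$ with $x(0)=x$. The jumps of $X^\varepsilon$ are $O(\varepsilon)\to0$, hence any limit point is supported on $C([0,T],\R^\ell)$; tightness in $\mathbb D([0,T],\R^\ell)$ I would get from the uniform bound on $\mathcal A^\varepsilon\varphi$ through the Aldous--Rebolledo criterion, compact containment being a non-issue as the trait process stays in $\Xcal$ (by convexity of $\Xcal$). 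By the standard martingale-problem characterisation (cf.\ \cite[Ch.~4]{EK86}) every limit point solves the martingale problem for the first-order transport operator $\mathcal A$ started at $x$; as $F$ is Lipschitz this problem is well posed and its unique solution is the point mass on $x(\cdot)$, whence $X^\varepsilon\Rightarrow x(\cdot)$. A shortcut avoiding tightness would be a Grönwall estimate on $\E[\,|X^\varepsilon_t-x(t)|\wedge1\,]$, bounding the drift discrepancy by the uniform error of the previous step and the martingale part by its quadratic variation, which is of order $\varepsilon^{-1}\!\cdot\varepsilon^{2}=\varepsilon$ over $[0,T]$. I expect the main obstacle to be precisely the uniform first-order expansion of the two-type survival probability $1-q_a(x+\varepsilon h,x)$ near the diagonal, out of which the positive part in the canonical equation is born; the remaining steps are routine adaptations of \cite{CM11}.
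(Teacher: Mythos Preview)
Your plan is correct and would succeed, but it is considerably more laborious than the route the paper actually takes. The paper does not redo the generator-convergence/tightness/martingale-problem argument at all: it simply observes that in this particular dormancy model the two-type extinction system is \emph{explicitly solvable}. Because dormant individuals do not reproduce, the second equation is linear and gives $q_d-q_a=\kappa(y)(1-q_a)/(\kappa(y)+\sigma(y))$; substituting this into the first equation makes it factor as $(1-q_a)\cdot(\text{linear in }q_a)=0$, and one reads off the closed form
\[
1-q_a(y,x)=\frac{f(y,x)}{\lambda(y)},\qquad
f(y,x)=\lambda(y)-\mu(y)-\frac{\alpha(y,x)\,\bar n_{x,a}\,(\kappa(y)+(1-p(y))\sigma(y))}{\kappa(y)+\sigma(y)},
\]
together with the explicit formula for $\bar n_{x,a}$. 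Hence $g(y,x)=\lambda(x)m(x)\bar n_{x,a}\,f(y,x)/\lambda(y)$ has precisely the structure of the transition function in \cite[Theorem~4.1]{CM11}, with $f$ playing the role of the smooth invasion fitness, and the paper then just invokes that theorem verbatim.

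So the step you flag as ``the main obstacle'' --- the first-order expansion of $1-q_a(x+\varepsilon h,x)$ via a perturbation analysis of the coupled fixed-point system at criticality --- is not an obstacle here: it dissolves into algebra. What your approach buys is generality (it would still work if the extinction system were not explicitly solvable, e.g.\ if dormant individuals could also branch), at the cost of redoing the entire Champagnat--M\'el\'eard machinery. What the paper's approach buys is a two-line reduction to \cite{CM11}, exploiting the specific linear structure of the dormant compartment; the positive part in the canonical equation then comes for free from the representation $g=[\,\cdot\,]_+$ inherent in $1-q_a=[f/\lambda]_+$, exactly as in the one-type case.
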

	Here, $\nabla_1g(x,x)$ with $x\in\R^\ell$ denotes the gradient of $g$ with respect to the first vector. Further, the term in square brackets is to be understood as scalar product.

	\begin{proof}
		Note that we can calculate the equilibrium \[
		\bar{n}_{x,a}=\frac{(\lambda(x)-\mu(x))(\kappa(x)+\sigma(x))}{\alpha(x,x)(\kappa(x)+(1-p(x))\sigma(x))}
		\]
		and the probability of survival of the mutant trait against a monomorphic resident population
		\[
		1-q_a(y,x)=1-\frac{\alpha(y,x)\bar{n}_{x,a}(\kappa(y)+(1-p(y))\sigma(y))}{\lambda(y)(\kappa(y)+\sigma(y))}-\frac{\mu(y)}{\lambda(y)}
		\]
		explicitly. In particular, it is possible to define a function \[f(y,x)=\lambda(y)-\mu(y)-\frac{\alpha(y,x)\bar{n}_{x,a}(\kappa(y)+(1-p(y))\sigma(y))}{\kappa(y)+\sigma(y)}\] such that \[
		g(y,x)=\lambda(x)m(x)\bar{n}_{x,a}\frac{f(y,x)}{\lambda(y)}.
		\]
		Observe that this is the form of the transition function found in \cite[Theorem 4.1]{CM11} with similar properties. Therefore,  one can now continue exactly as in their proof to obtain the result.  
	\end{proof}
	
	\begin{remark}
 	In \cite{BBC17} it has been shown that one can obtain a similar result even when one takes the limits simultaneously instead of successively as we have done. However, the corresponding proof is significantly more technical. While we believe that a corresponding result with dormancy should exist, we stick with the simpler version provided here, which is sufficient for our purposes.
 	
	\end{remark}

 Our next aim is to derive an explicit   criterion for what is called \emph{evolutionary branching} in one dimension.  Recall that the term describes the situation where   at a point $x^*$, a previously monomorphic population  living  in a neighbourhood of $x^*$ may suddenly become dimorphic with an increasing distance between the genotypes. Consider now $\Xcal\subseteq\R$ as a closed interval and let $x^*$ be a point towards which the solution of the canonical equation converges. We will focus on traits which satisfy $\partial_1g(x^*,x^*)=0$ or equivalently $\partial_1 f(x^*,x^*)=0$. We refer to such traits as \emph{evolutionary singularities}. While this point will never be reached by $x(t)$, we can ask, how the stochastic system behaves around these points. A criterion for evolutionary branching was  derived  by \cite{MGM+96}. A more rigorous version was then  proved  by Champagnat and Méléard. Under the assumptions of Theorem \ref{Theorem: CEAD}, they found the following result, which carries over  one-to-one  to our model: 
		
	\begin{theorem}
	\label{thm:branching criterion}
		Consider the process $Z^\varepsilon$ and assume that the canonical equation of adaptive dynamics with initial condition $x$ converges towards an evolutionary singularity $x^*$ in the interior of $\Xcal$. We also assume that $\lambda,\mu,\kappa, \sigma$ and $p$ are three times continuously differentiable and $\alpha(x,y)$ is four times continuously differentiable with respect to the first variable. Further assume that the mutation kernel $M((x,a),\cdot)$ has mass on the positive and negative real axis for any $x$ in the interior of $\Xcal$. Assume that the function $f$ from the proof of Theorem \ref{Theorem: CEAD} satisfies \[
		\partial_{22}f(x^*,x^*)>\partial_{11}f(x^*,x^*)\quad\text{and}\quad \partial_{22}f(x^*,x^*)+\partial_{11}f(x^*,x^*)\neq 0.
		\]
		Then, \begin{itemize}
			\item if $\partial_{11}f(x^*,x^*)>0$, then there is almost surely evolutionary branching at $x^*$,
			\item if $\partial_{11}f(x^*,x^*)<0$, then there is almost surely no evolutionary branching.
		\end{itemize}
	\end{theorem}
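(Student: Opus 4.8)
The plan is to deduce the statement from the branching criterion of Champagnat and Méléard \cite{CM11} by showing that the invasion fitness of our model, carried by $g$ and equivalently by the function $f$ from the proof of Theorem~\ref{Theorem: CEAD}, obeys exactly the structural hypotheses under which their argument runs, and that the two displayed inequalities on the second derivatives of $f$ are precisely the assumptions they impose. The starting point is a set of diagonal identities. Substituting the explicit equilibrium $\bar n_{x,a}=\frac{(\lambda(x)-\mu(x))(\kappa(x)+\sigma(x))}{\alpha(x,x)(\kappa(x)+(1-p(x))\sigma(x))}$ into the definition of $f$ gives $f(x,x)=0$ for every $x\in\Xcal$, i.e.\ a resident population at equilibrium has zero invasion fitness against itself. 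Differentiating this identity once and twice along the diagonal yields $\partial_1f(x,x)+\partial_2f(x,x)=0$ and $\partial_{11}f(x,x)+2\partial_{12}f(x,x)+\partial_{22}f(x,x)=0$ for all $x$; in particular, at an evolutionary singularity, where $\partial_1f(x^*,x^*)=0$ by definition, we also get $\partial_2f(x^*,x^*)=0$ and $\partial_{12}f(x^*,x^*)=-\tfrac12\big(\partial_{11}f(x^*,x^*)+\partial_{22}f(x^*,x^*)\big)$.

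Next I would transfer these to $g$. Writing $g(y,x)=c(x)f(y,x)/\lambda(y)$ with $c(x)=\lambda(x)m(x)\bar n_{x,a}>0$, and using that $f$, $\partial_1f$ and $\partial_2f$ all vanish at $(x^*,x^*)$, a direct computation gives $\partial_{ij}g(x^*,x^*)=m(x^*)\bar n_{x^*,a}\,\partial_{ij}f(x^*,x^*)$ for $i,j\in\{1,2\}$, so the Hessian of $g$ at the singularity equals that of $f$ up to the strictly positive factor $m(x^*)\bar n_{x^*,a}$. Three consequences are relevant. First, $\partial_{11}g(x^*,x^*)>0$ (resp.\ $<0$) iff $\partial_{11}f(x^*,x^*)>0$ (resp.\ $<0$), and $\partial_{11}g(x^*,x^*)>0$ means $x^*$ is a local fitness minimum, so nearby mutants can invade. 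Second, the inequality $\partial_{11}g(x^*,x^*)+\partial_{12}g(x^*,x^*)<0$, which makes $x^*$ attracting for the canonical equation (since the sign of $\dot x$ equals that of $\partial_1g(x,x)$), is, via $\partial_{12}f(x^*,x^*)=-\tfrac12(\partial_{11}f+\partial_{22}f)(x^*,x^*)$, equivalent to $\partial_{11}f(x^*,x^*)<\partial_{22}f(x^*,x^*)$, i.e.\ the first displayed hypothesis. Third, the non‑degeneracy $\partial_{11}g(x^*,x^*)+\partial_{22}g(x^*,x^*)\neq0$, equivalently $\partial_{12}f(x^*,x^*)\neq0$, is the second displayed hypothesis. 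Thus the hypotheses on $f$ are exactly those of the branching theorem of \cite{CM11}, transported to our fitness function.

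It then remains to check the qualitative assumptions of their proof and to invoke it. The explicit formulas for $\bar n_{x,a}$ and $1-q_a(y,x)$ recorded in the proof of Theorem~\ref{Theorem: CEAD}, together with the assumed $C^3$/$C^4$ regularity of $\lambda,\mu,\kappa,\sigma,p$ and $\alpha$ and positivity of $\bar n_{x,a}$ near $x^*$, show that $g$ is as smooth as needed; $M((x,a),\cdot)$ charges both half‑lines by assumption; and convergence of the canonical equation to the interior point $x^*$ is assumed. Under these conditions one runs the argument of \cite[Section~4]{CM11} essentially verbatim. When $\partial_{11}f(x^*,x^*)>0$, any pair of traits $x^*-\delta$, $x^*+\delta$ with $\delta$ small is mutually invasible and the two‑trait version of \eqref{eq: DynSystem} has a unique locally strongly stable coexistence equilibrium depending smoothly on the traits; the fixation and large‑deviation estimates of \cite{CM11}, valid in the regime $\exp(-VK)\ll u_K\ll1/(K\log K)$, then force the two emerging trait groups to move apart past a fixed threshold, which is the almost sure branching asserted in the theorem. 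When $\partial_{11}f(x^*,x^*)<0$, $x^*$ is a local ESS: mutants arising at distance $O(\varepsilon)$ from $x^*$ have negative invasion fitness, so no dimorphism can build up and branching does not occur.

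The main obstacle I anticipate is the coexistence‑and‑divergence step for the dimorphic system. In \cite{CM11} this is carried out for a two‑dimensional competitive Lotka--Volterra system, whereas here the two‑trait version of \eqref{eq: DynSystem} is four‑dimensional because of the dormant compartments; one must verify that the sign of the quantity governing whether $x^*\pm\delta$ coexist, and whether the coexistence equilibrium is driven away from the diagonal, is still controlled by $\partial_{11}g(x^*,x^*)$ and is unaffected at leading order by dormancy. The point that makes this go through is already contained in the proof of Theorem~\ref{Theorem: CEAD}: the invasion exponent of a rare trait $y$ against a resident equilibrium equals $f(y,x)$ up to a positive factor, and only the diagonal second‑order expansion of $f$ enters the local coexistence analysis; combined with the standing assumption of Theorem~\ref{Theorem: PESD} that such dimorphic systems admit a unique locally strongly stable equilibrium, this allows the remaining geometric and stochastic estimates of \cite{CM11} to be imported without change.
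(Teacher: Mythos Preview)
Your proposal is correct and follows the same strategy as the paper: reduce to the branching criterion of \cite[Theorem~4.10]{CM11} by verifying that the invasion fitness $f$ has the required structure. The paper's sketch is briefer on the preliminary diagonal identities you spell out and instead concentrates on the one ingredient you correctly flag as the main obstacle: it writes down the explicit dimorphic invasion fitness $f(z;x,y)=\lambda(z)-\mu(z)-(\alpha(z,x)\hat n_{x,a}+\alpha(z,y)\hat n_{y,a})\frac{\kappa(z)+(1-p(z))\sigma(z)}{\kappa(z)+\sigma(z)}$ and the linear system determining the coexistence equilibrium $(\hat n_{x,a},\hat n_{y,a})$ of the four-dimensional system, which is what allows the properties of \cite[Proposition~4.13]{CM11} to be checked directly rather than by your more abstract appeal to the standing assumption of Theorem~\ref{Theorem: PESD}.
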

\begin{proof}[Sketch of proof]
 Most of the proof can be taken from the proof of \cite[Theorem 4.10]{CM11}. However, we need to  specify  the fitness function for an invading mutant trait $z$ in a dimorphic population consisting of traits $x$ and $y$. This is done by setting \[
 f(z;x,y)=\lambda(z)-\mu(z)-\frac{(\alpha(z,x)\hat{n}_{x,a}+\alpha(z,y)\hat{n}_{y,a})(\kappa(z)+(1-p(z))\sigma(z))}{\kappa(z)+\sigma(z)},
 \]
 where $\hat{n}_{x,a}$ and $\hat{n}_{y,a}$ denote the active population sizes of trait $x$ and $y$ respectively in their dimorphic equilibrium. To calculate these explicitly, we consider the dynamical system \begin{equation}\begin{aligned}
 	\dot{n}_x^a(t)&=n_x^a(t)\left(\lambda(x)-\mu(x)-\alpha(x,x)n_x^a(t)-\alpha(x,y)n_y^a(t)\right)+\sigma(x) n_x^d(t)\\
 	\dot{n}_x^d(t)&=p(x)n_x^a(t)(\alpha(x,x)n_x^a(t)+\alpha(x,y)n_y^a(t))-\sigma(x) n_x^d(t)\\
 	\dot{n}_y^a(t)&=n_y^a(t)\left(\lambda(y)-\mu(y)-\alpha(y,x)n_x^a(t)-\alpha(y,y)n_y^a(t)\right)+\sigma(y) n_y^d(t)\\
 	\dot{n}_y^d(t)&=p(y)n_y^a(t)(\alpha(y,x)n_x^a(t)+\alpha(y,y)n_y^a(t))-\sigma(y) n_y^d(t),\\
 \end{aligned}\label{eq: dynamical system}
\end{equation}
 which describes the behaviour of the two traits in absence of mutations as is known from standard theory (cf. \cite{BT20}, \cite{EK86}). It is easily verified that the equilibrium of \eqref{eq: dynamical system} satisfies \[
 \alpha(y,x)\hat{n}_{x,a}+\alpha(y,y)\hat{n}_{y,a}=\frac{(\lambda(y)-\mu(y))(\kappa(y)+\sigma(y))}{\kappa(y)+(1-p(y))\sigma(y)}\]
 and \[\quad\alpha(x,x)\hat{n}_{x,a}+\alpha(x,y)\hat{n}_{y,a}=\frac{(\lambda(x)-\mu(x))(\kappa(x)+\sigma(x))}{\kappa(x)+(1-p(x))\sigma(x)}.
 \]
 Hence, $f(z,x,y)$ is well-defined if and only if the linear system has a unique solution and hence $\alpha(y,x)\alpha(x,y)\neq\alpha(x,x)\alpha(y,y)$. With this definition and explicit  representation  one can show the properties given in \cite[Proposition 4.13]{CM11}  from which the result  follows.
\end{proof}

Aside from the technical assumptions, $\partial_{11}f(x^*,x^*)$ decides whether the singularity $x^*$ is a local fitness minimum or maximum.  On the one hand, if  it is a minimum, then invading traits from the left and the right of the singularity have a higher fitness and can coexist for suitable combinations. Since these traits are now coexisting, the singular trait $x^*$ is no longer able to invade  (but traits further away may still be able to invade and push the coexisting traits further apart).   If on the other hand $\partial_{11}f(x^*,x^*)<0$ and $x^*$ is a fitness maximum, then surrounding traits of $x^*$ may invade but not extinguish the singular trait if $\partial_{11}f(x^*,x^*)+\partial_{22}f(x^*,x^*)>0$. This leads to coexistence of multiple traits in a small neighbourhood of the singular trait as was shown by \cite[Proposition 4.11]{CM11}. A graphical description of how these situations may arise is given in Figures 2.3 and 2.4 of \cite{D11}. Here, the model is slightly different but the interpretation also applies to our situation.

While the above criterion  in its abstract formulation is  unchanged from the one stated  by  Champagnat and Méléard, the implicit presence or absence of dormancy will significantly affect the emergence of evolutionary branching, as we will investigate in an important special scenario below. In general, the impact of dormancy depends heavily on the shape of the functions $\kappa$, $\sigma$ and $p$  which makes a general discussion rather complex. 
%Thus, we will focus on an important special case, which is the content of the next section. 
 %and as such we cannot give a description encompassing all possible cases. 

	\section{Dormancy in a classical model for evolutionary branching} 
	\label{sec-examples}

	\subsection{A simple explicit criterion for evolutionary branching in the presence of dormancy}\label{sec-3.1}
%	\subsection{Dormancy in the model of Dieckman and Doebeli and the criterion of Metz et al.}
	%We now work in the set-up of Dieckman and Doebeli, Roughgraden, and Champagnat Meleard. 

	In order to gain an understanding of the consequences of dormancy in our abstract evolutionary branching criterion, we now discuss dormancy in the concrete set-up provided by Champagnat and Méléard, which is in turn based on Dieckmann and Doebeli \cite{DD99}, Christensen and Loescke \cite{CL80}, Roughgarden \cite{R72} and references therein. 

%
%
% a famous model for sympatric speciation by Dieckman and Doebeli \cite{DD99} in the mathematical framework provided by Champagnat and Méléard.  
	% in concrete and well-, % we examine 
%	
%	\section{Examples}\label{sec-examples}
%	
%	We illustrate the phenomenon of evolutionary branching with concrete simulations. We will also provide some calculations showing the impact of dormancy on the evolutionary branching criterion.\\
%	
%	In order to compare and contrast our result, we consider 
%	we now discuss the 
%	the set-up provided by Champagnat and Méléard, which has been discussed as one of the simplest models for evolutionary branching by Dieckman and Doebeli \cite{DD99} (there denoted by sympatric speciation). 
	Here, the parameters  of the general micro-model from the beginning of Section \ref{sec-model} are specified to  
	\begin{gather*}
		\Xcal=[-2,2], \quad \mu(x)=0,\quad m(x)=m,\quad M((x,a),h)dh\sim\mathcal{N}(0,\rho^2),\\
		\lambda(x)=\exp\left(-\frac{x^2}{2\sigma_b^2}\right),\quad \alpha(x,y)=\exp\left(-\frac{(x-y)^2}{2\sigma_{\alpha}^2}\right).
	\end{gather*}
	We introduce  a specific type of dormancy  by letting 
	\[
	\kappa(x)=0,\quad \sigma(x)=\sigma,\quad p(x)= 1-\exp(-|x|^r/(2\sigma_p^2))	
	\]

for $\sigma_p, r >0$. Note that the  choice  $r=2$ appears  particularly natural,   since then dormancy and competition/fitness act on similar scales.

%	We will later restrict ourselves to the	two cases $p(x)=\gamma|x|^r$ and $p(x)=1-\exp(-|x|^r/(2a^2))$. 	Ich würde mich hier nur auf das eponentielle Beispiel mit $r=2$ konzentrieren (die anderen Fälle können evtl. in die Discussion. Ausserdem sollte das $a$ noch ein $\sigma_p$ werden. 

 Further, note that our choice for the dormancy-initiation function $p$  introduces a {\em reproductive trade-off}  into the system,  in the sense that individuals who can evade death by becoming dormant more efficiently  in turn   suffer from an increased need for resources and hence lower reproduction  rate. Such a trade-off has been reported in the dormancy literature (e.g. \cite{LdHWB+21}) and is a crucial modelling assumption.  

 In order to get an explicit criterion for evolutionary branching in the presence of competition-induced dormancy as realized above, we now need to calculate the derivatives of the fitness  function $f$ and then apply Theorem \ref{thm:branching criterion}. We obtain  
	\[
	f(y,x)=\lambda(y)-\frac{\alpha(y,x)\lambda(x)}{1-p(x)}(1-p(y)).
	\]
	Calculating the first derivative  yields   
	\begin{align}
		\partial_1f(x,x)=\lambda'(x)+\frac{\lambda(x)\alpha(x,x) p'(x)}{1-p(x)}-\partial_1\alpha(x,x)\lambda(x)=\lambda'(x)+\frac{\lambda(x)\alpha(x,x) p'(x)}{1-p(x)}, 
		\label{eq: first-deriv}
	\end{align}	
	and the second derivative with respect to the first component gives \begin{align}
		\partial_{11}f(x,x)&=\lambda''(x)-\partial_{11}\alpha(x,x)\lambda(x)+2\frac{\lambda(x)}{1-p(x)}\partial_1\alpha(x,x)p'(x)+\frac{\lambda(x)\alpha(x,x)p''(x)}{1-p(x)}\nonumber\\
		&=\lambda''(x)-\partial_{11}\alpha(x,x)\lambda(x)+\frac{\lambda(x)\alpha(x,x)p''(x)}{1-p(x)},\label{eq: second-deriv}
	\end{align}
	 Note that more generally,  these equations hold as long as we assume the competition  function to be at a maximum for identical traits.

		For our choice of $p$, the  first  derivative takes the form
\[
	\partial_1f(x,x)=-\frac{xe^{-x^2/(2\sigma_b^2)}}{\sigma_b^2}+\frac{rx|x|^{r-2}e^{-x^2/(2\sigma_b^2)}}{2\sigma_p^2}=0\quad\Longleftrightarrow\quad \frac{rx|x|^{r-2}}{2\sigma_p^2}=\frac{x}{\sigma_b^2}.
	\]
	For $r>1$, this admits the solution $x=0$ but if $r\neq 2$ there are up to two further solutions which can be computed in the closed form: 
	\[
	x_{1,2}=\pm\left(\frac{r\sigma_b^2}{2\sigma_p^2}\right)^{-1/(r-2)}.
	\]
	These two points are also the only equilibria in the case $r\in(0,1]$. The stability of these zeros of the first derivative depends only on $r$ (for $r\neq 2$). For $r>2$, only $0$ is stable, whereas for $r<2$, the equilibrium $0$ becomes unstable and $x_{1,2}$ both become stable.
	 
	 In the most interesting case  when $r=2$, the equilibrium $0$ is stable if and only if  the {\em stability condition}  
	$$
	\frac{1}{\sigma_p^2}<\frac{1}{\sigma_b^2}
	$$ 
  is satisfied.  
	
	The second derivative in  our  scenario reads 
	\begin{align*}
		\partial_{11}f(x,x)=\frac{(x^2-\sigma_b^2)e^{-x^2/(2\sigma_b^2)}}{\sigma_b^4}+\frac{e^{-x^2/(2\sigma_b^2)}}{\sigma_\alpha^2}-\frac{r|x|^r(r|x|^r-2\sigma_p^2r+2\sigma_p^2)e^{-x^2/(2\sigma_b^2)}}{4x^2\sigma_p^4}.
	\end{align*}
	If now $x^*=0$ is the stable evolutionary singularity, then this simplifies to 
	\[
	\partial_{11}f(0,0)=\begin{cases}
		\frac{1}{\sigma_{\alpha}^2}-\frac{1}{\sigma_b^2},&\quad\text{if }r>2\\
		\frac{1}{\sigma_{\alpha}^2}-\frac{1}{\sigma_b^2}+\frac{1}{\sigma_p^2},&\quad\text{if }r=2.
	\end{cases}
	\]
	We also compute
	\[
	\partial_{22}f(0,0)=\begin{cases}
		\frac{1}{\sigma_{\alpha}^2}+\frac{1}{\sigma_b^2},&\quad\text{if }r>2\\
		\frac{1}{\sigma_{\alpha}^2}+\frac{1}{\sigma_b^2}-\frac{1}{\sigma_p^2},&\quad\text{if }r=2.
	\end{cases}
	\]
	The criterion for evolutionary branching in the case of $r<2$ is also dependent on the equilibrium itself and as such the relationship between the parameters becomes more involved. For now, we only consider the case $r\geq 2$. We find that $\partial_{22}f(0,0)>\partial_{11}f(0,0)$ is always satisfied. Comparing our criterion for evolutionary branching around $0$ with the criterion obtained by \cite{CM11}, we see no change in the case $r>2$. This is due to the effect of dormancy being too weak in the neighbourhood of the singularity and in fact being of different order compared to the birth and competition rates. However, when they are of the same order in the case $r=2$, then we obtain an additional positive constant depending on the dormancy parameter	$\sigma_p$. This shows that evolutionary branching is supported by dormancy, as long as the mechanism is sufficiently strong. At the same time, dormancy must not be too strong, i.e. $\tfrac{1}{\sigma_p^2}>\tfrac{1}{\sigma_b^2}$, because the population would not converge to the evolutionary singularity $0$ but instead be driven towards the boundary of the trait space. 
 
Hence we arrive at the main result of this section.
\begin{theorem}[Evolutionary branching in the presence of dormancy]
\label{thm:criterion}
Under the above notation, with $r=2$, assume that the stability condition 
$$
	\frac{1}{\sigma_p^2}<\frac{1}{\sigma_b^2}
	$$ 
is satisfied. Then, we observe local convergence into the stable singularity at 0 with subsequent evolutionary branching, if 
$$
\frac{1}{\sigma_{\alpha}^2}-\frac{1}{\sigma_b^2}+\frac{1}{\sigma_p^2}>0.
$$
In particular, the presence of dormancy always increases the parameter range for evolutionary branching.
\end{theorem}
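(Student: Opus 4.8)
The plan is to obtain Theorem~\ref{thm:criterion} as a direct specialisation of the abstract branching criterion, Theorem~\ref{thm:branching criterion}, to the explicit Dieckmann--Doebeli-type parametrisation, reusing the diagonal derivative computations \eqref{eq: first-deriv}--\eqref{eq: second-deriv} performed above. The first step is to check that the standing hypotheses of Theorems~\ref{Theorem: PESD}, \ref{Theorem: CEAD} and~\ref{thm:branching criterion} hold in this set-up: the rates $\lambda(x)=e^{-x^2/(2\sigma_b^2)}$, $\mu\equiv 0$, $\kappa\equiv 0$, $\sigma(\cdot)\equiv\sigma$ and $p(x)=1-e^{-x^2/(2\sigma_p^2)}$ (recall $r=2$) are $C^\infty$ on the compact interval $\Xcal=[-2,2]$, hence in particular $C^3$; $\alpha(x,y)=e^{-(x-y)^2/(2\sigma_\alpha^2)}$ is $C^\infty$, thus $C^4$ in its first argument, and attains its maximum on the diagonal, which is exactly the condition under which \eqref{eq: first-deriv}--\eqref{eq: second-deriv} were derived; the mutation law $\mathcal{N}(0,\rho^2)$ has full support, so it charges both half-axes, has finite and (constant, hence) bounded third moments, and depends on $x$ in a trivially Wasserstein-Lipschitz way; and $\phi\equiv0$ by assumption. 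Coexistence of the monomorphic resident and convergence of the one- and two-trait Lotka--Volterra systems \eqref{eq: DynSystem}, \eqref{eq: dynamical system} to the explicit equilibria written in the proofs of Theorems~\ref{Theorem: CEAD} and~\ref{thm:branching criterion} are standard for this class of models (and need only the routine stability check of those equilibria), so the three theorems apply.

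The second step is to identify $0$ as an evolutionary singularity in the interior of $\Xcal$ and to verify that the canonical equation converges to it under the stability condition. Since $\lambda'(0)=p'(0)=0$, \eqref{eq: first-deriv} gives $\partial_1 f(0,0)=0$, so $0\in(-2,2)$ is singular. For the convergence, note that $\ell=1$ and $M((x,a),\cdot)=\mathcal{N}(0,\rho^2)$ is symmetric, so the canonical equation of Theorem~\ref{Theorem: CEAD} reduces to $\dot{x}(t)=\frac{\rho^2}{2}\,\partial_1 g(x(t),x(t))$; using $\alpha(x,x)=1$ (whence $f(x,x)=0$ and $\partial_1 g(x,x)=m\,\bar{n}_{x,a}\,\partial_1 f(x,x)$), together with $\bar{n}_{x,a}=\lambda(x)/(1-p(x))>0$ and the explicit form $\partial_1 f(x,x)=x\,e^{-x^2/(2\sigma_b^2)}(\sigma_p^{-2}-\sigma_b^{-2})$ valid for $r=2$, this becomes
\[
\dot{x}(t)=C(x(t))\,x(t),\qquad C(x)=\frac{\rho^2}{2}\,m\,\bar{n}_{x,a}\,e^{-x^2/(2\sigma_b^2)}\Bigl(\frac{1}{\sigma_p^2}-\frac{1}{\sigma_b^2}\Bigr).
\]
Under the stability condition $\sigma_p^{-2}<\sigma_b^{-2}$ the function $C$ is continuous and strictly negative on $[-2,2]$, hence $C\le-c_0$ for some $c_0>0$; then $\frac{\mathrm{d}}{\mathrm{d}t}x(t)^2=2C(x(t))x(t)^2\le-2c_0\,x(t)^2$ forces exponential convergence of $x(t)$ to $0$ from every initial condition in $[-2,2]$, in particular locally. (Consistently with the earlier discussion, $0$ is then the unique zero of $\partial_1 f(\cdot,\cdot)$ on the diagonal.)

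The third step is to apply Theorem~\ref{thm:branching criterion} at $x^*=0$. Substituting the values already computed, for $r=2$ one has $\partial_{11}f(0,0)=\sigma_\alpha^{-2}-\sigma_b^{-2}+\sigma_p^{-2}$ and $\partial_{22}f(0,0)=\sigma_\alpha^{-2}+\sigma_b^{-2}-\sigma_p^{-2}$, so that $\partial_{22}f(0,0)-\partial_{11}f(0,0)=2(\sigma_b^{-2}-\sigma_p^{-2})>0$ by the stability condition, while $\partial_{22}f(0,0)+\partial_{11}f(0,0)=2\sigma_\alpha^{-2}\neq0$; thus both technical hypotheses of Theorem~\ref{thm:branching criterion} are satisfied, and its dichotomy yields almost sure evolutionary branching at $0$ exactly when $\partial_{11}f(0,0)>0$, i.e.\ when $\sigma_\alpha^{-2}-\sigma_b^{-2}+\sigma_p^{-2}>0$. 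For the comparison assertion, the dormancy-free model (formally $p\equiv0$, equivalently $\sigma_p\to\infty$, which via \eqref{eq: second-deriv} recovers $\partial_{11}f(0,0)=\sigma_\alpha^{-2}-\sigma_b^{-2}$) has classical branching condition $\sigma_\alpha^{-2}-\sigma_b^{-2}>0$; since $\sigma_p^{-2}>0$, the region $\{\sigma_\alpha^{-2}>\sigma_b^{-2}-\sigma_p^{-2}\}$ strictly contains $\{\sigma_\alpha^{-2}>\sigma_b^{-2}\}$, the extra region $\{\sigma_b^{-2}-\sigma_p^{-2}<\sigma_\alpha^{-2}<\sigma_b^{-2}\}$ being non-empty precisely because the stability condition makes $\sigma_b^{-2}-\sigma_p^{-2}>0$. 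The main (and essentially only) obstacle is the second step, namely establishing that the canonical equation converges to $0$ — Theorem~\ref{thm:branching criterion} takes this as a hypothesis — but this is exactly where the stability condition enters, and it is dispatched by the one-line Lyapunov estimate above on the compact trait space; everything else is substitution into formulae already derived.
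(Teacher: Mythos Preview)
Your proof is correct and follows essentially the same route as the paper: the theorem is obtained by specialising the abstract criterion of Theorem~\ref{thm:branching criterion} to the Dieckmann--Doebeli parametrisation via the diagonal derivative computations \eqref{eq: first-deriv}--\eqref{eq: second-deriv}, checking $\partial_{22}f(0,0)>\partial_{11}f(0,0)$ and $\partial_{11}f(0,0)+\partial_{22}f(0,0)\neq0$, and reading off the branching condition $\partial_{11}f(0,0)>0$. Your Step~2 is in fact slightly more explicit than the paper's treatment: the paper simply asserts that $0$ is stable for the canonical equation iff $\sigma_p^{-2}<\sigma_b^{-2}$, whereas you supply the Lyapunov estimate $\tfrac{\mathrm d}{\mathrm dt}x(t)^2\le -2c_0\,x(t)^2$ on the compact trait space, which cleanly justifies the ``local convergence'' hypothesis of Theorem~\ref{thm:branching criterion}.
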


	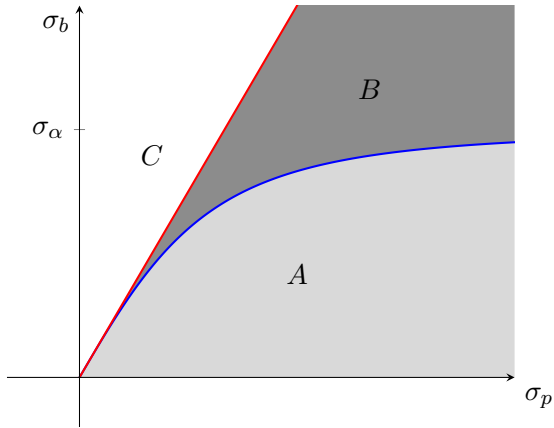
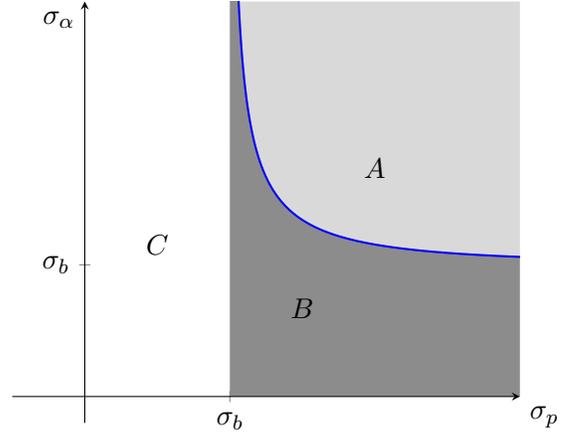
\begin{figure}
		\begin{subfigure}[t]{0.45\textwidth}
			\resizebox{\textwidth}{!}{
				\begin{tikzpicture}[declare function={a=0.5;b=3;f(\x)=1/((1+1/\x^2)^.5); g(\x)=\x;}]
					\begin{axis}[axis lines=middle,axis on top,xlabel=$\sigma_p$,ylabel=$\sigma_b$,
						xmin=-0.5,xmax=3,ymin=-0.2,ymax=1.5,ytick={1},yticklabels={$\sigma_\alpha$},
						xtick=\empty, xticklabels={},
						every axis x label/.style={at={(current axis.right of origin)},anchor=north west},
						every axis y label/.style={at={(current axis.above origin)},anchor=north east}]
						\addplot[name path=A,blue,thick,domain=0.0001:5,samples=1000] {f(x)};
						\addplot[name path=B,red,thick,domain=0.0001:5,samples=200] {g(x)};
						\path[name path=C] (\pgfkeysvalueof{/pgfplots/xmin},0) -- (\pgfkeysvalueof{/pgfplots/xmax},0);
						\addplot [gray!90] fill between [of=B and A,soft clip={domain=0.001:3},
						];
						\addplot [gray!30] fill between [of=C and A,soft clip={domain=0:b},
						];
						\path ({(b)/2},{f((b)/2)/2}) node{$A$}
						({(1+b)/2},{1.3*f((1+b)/2)}) node{$B$}
						({(1)/2},{1*f((1+b)/2)}) node{$C$};
					\end{axis}
				\end{tikzpicture}
			}
			\subcaption{The impact of $\sigma_p$ and $\sigma_b$ for constant $\sigma_\alpha$ on evolutionary branching in the case $r=2$.}
		\end{subfigure}\hfill
		\begin{subfigure}[t]{0.45\textwidth}
			\resizebox{\textwidth}{!}{
				\begin{tikzpicture}[declare function={a=0.5;b=3;f(\x)=1/((1-1/\x^2)^.5);}]
					\begin{axis}[axis lines=middle,axis on top,xlabel=$\sigma_p$,ylabel=$\sigma_\alpha$,
						xmin=-0.5,xmax=3,ymin=-0.2,ymax=3,xtick={1},xticklabels={$\sigma_b$},
						ytick={1},yticklabels={$\sigma_b$},
						every axis x label/.style={at={(current axis.right of origin)},anchor=north west},
						every axis y label/.style={at={(current axis.above origin)},anchor=north east}]
						\addplot[name path=A,blue,thick,domain=0.00001:5,samples=1000] {f(x)};
						\path[name path=B] (\pgfkeysvalueof{/pgfplots/xmin},\pgfkeysvalueof{/pgfplots/ymax}) -- (\pgfkeysvalueof{/pgfplots/xmax},\pgfkeysvalueof{/pgfplots/ymax});
						\path[name path=C] (\pgfkeysvalueof{/pgfplots/xmin},0) -- (\pgfkeysvalueof{/pgfplots/xmax},0);
						\addplot [gray!30] fill between [
						of=A and B,soft clip={domain=1:3},
						];
						\addplot [gray!90] fill between [
						of=A and C,soft clip={domain=1:b},
						];
						\path ({(b)/2},{f((b)/2)/2}) node{$B$}
						({(1+b)/2},{1.5*f((1+b)/2)}) node{$A$}
						({(1)/2},{1*f((1+b)/2)}) node{$C$};
					\end{axis}
				\end{tikzpicture}
			}
			\subcaption{The impact of $\sigma_p$ and $\sigma_\alpha$ for constant $\sigma_b$ on evolutionary branching in the case $r=2$.}
		\end{subfigure}
		\caption{The blue curve (showing $\partial_{11}f(0,0)=0$) decides whether there is evolutionary branching, the red curve (showing $\tfrac{1}{\sigma_p^2}=\tfrac{1}{\sigma_b^2}$) decides whether $0$ is a stable evolutionary singularity. In area $A$ (light grey), $0$ is a stable singularity, but there is no evolutionary branching. Area $B$ (dark grey) shows the admissible combinations such that $0$ is a stable singularity and the evolutionary branching criterion $\partial_{11}f(0,0)>0$ is satisfied. In area $C$, $0$ is not a stable singularity. We omit the comparision of $\sigma_{\alpha}$ against $\sigma_b$ because it is qualitatively similar to Figure (a). }
		\label{Fig: BranchingDynamics-1}
	\end{figure}
	
We illustrate this behaviour in Figure \ref{Fig: BranchingDynamics-1}	
%	We further illustrate this behaviour 
	and with  concrete simulations of the stochastic population model. For this, we have chosen $r=2$, $\sigma_p^2=2$, $\sigma_\alpha^2=2$ and different $\sigma_b$. The results are shown in Figure \ref{Fig: Expdormsims-r=2}. Note that in the case $\sigma_b^2=1.2$, there would be no evolutionary branching without the aid of dormancy since $\tfrac{1}{\sigma_{\alpha}^2}-\tfrac{1}{\sigma_b^2}<0$.
	\begin{figure}
		\begin{subfigure}[t]{0.45\textwidth}
			\hspace{-0.75cm}\begin{tikzpicture}
				\node (img1)  {\includegraphics[width=1.2\textwidth]{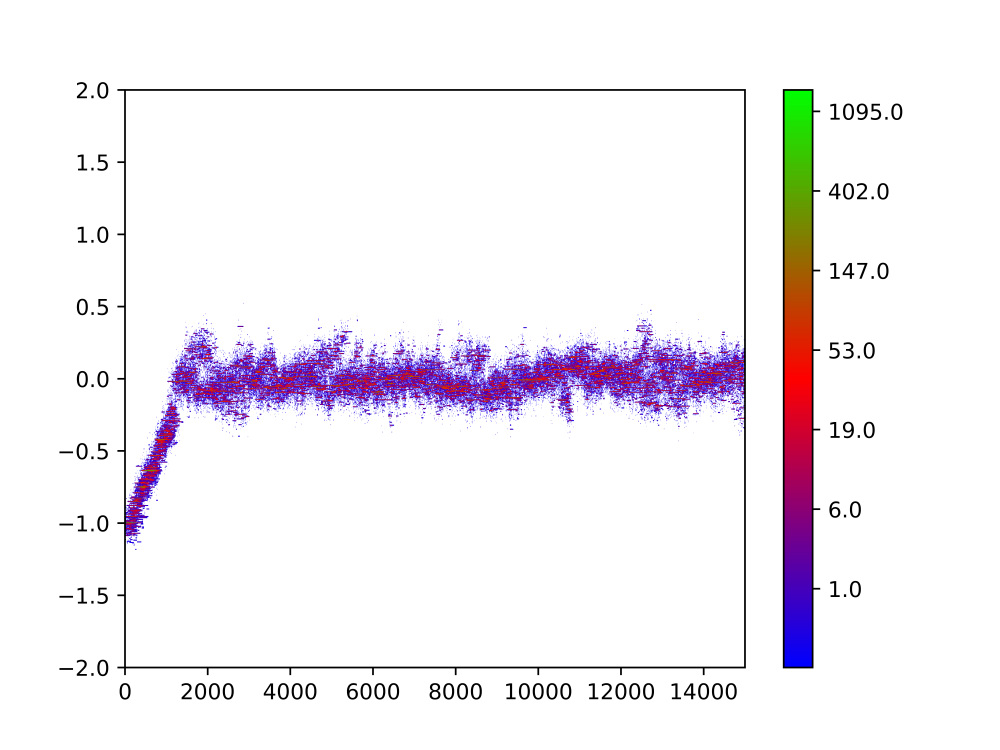}};
				\node[below=of img1, node distance=0cm, yshift=1.5cm, xshift=-0.5cm] {\footnotesize Time};
				\node[left=of img1, node distance=0cm, rotate=90, anchor=center,yshift=-1.5cm] {\footnotesize Trait};
			\end{tikzpicture}
			\subcaption{A simulation of the model with $r=2$, $\sigma_\alpha^2=2$, $\sigma_b^2=0.8$, $\sigma_p^2=2$, $\sigma=0.1$, $m=0.01$, $\rho^2=0.05$, $K=1000$.}
		\end{subfigure}\hspace{0.7cm}
		\begin{subfigure}[t]{0.45\textwidth}
			\hspace{-0.5cm}\begin{tikzpicture}
				\node (img1)  {\includegraphics[width=1.2\textwidth]{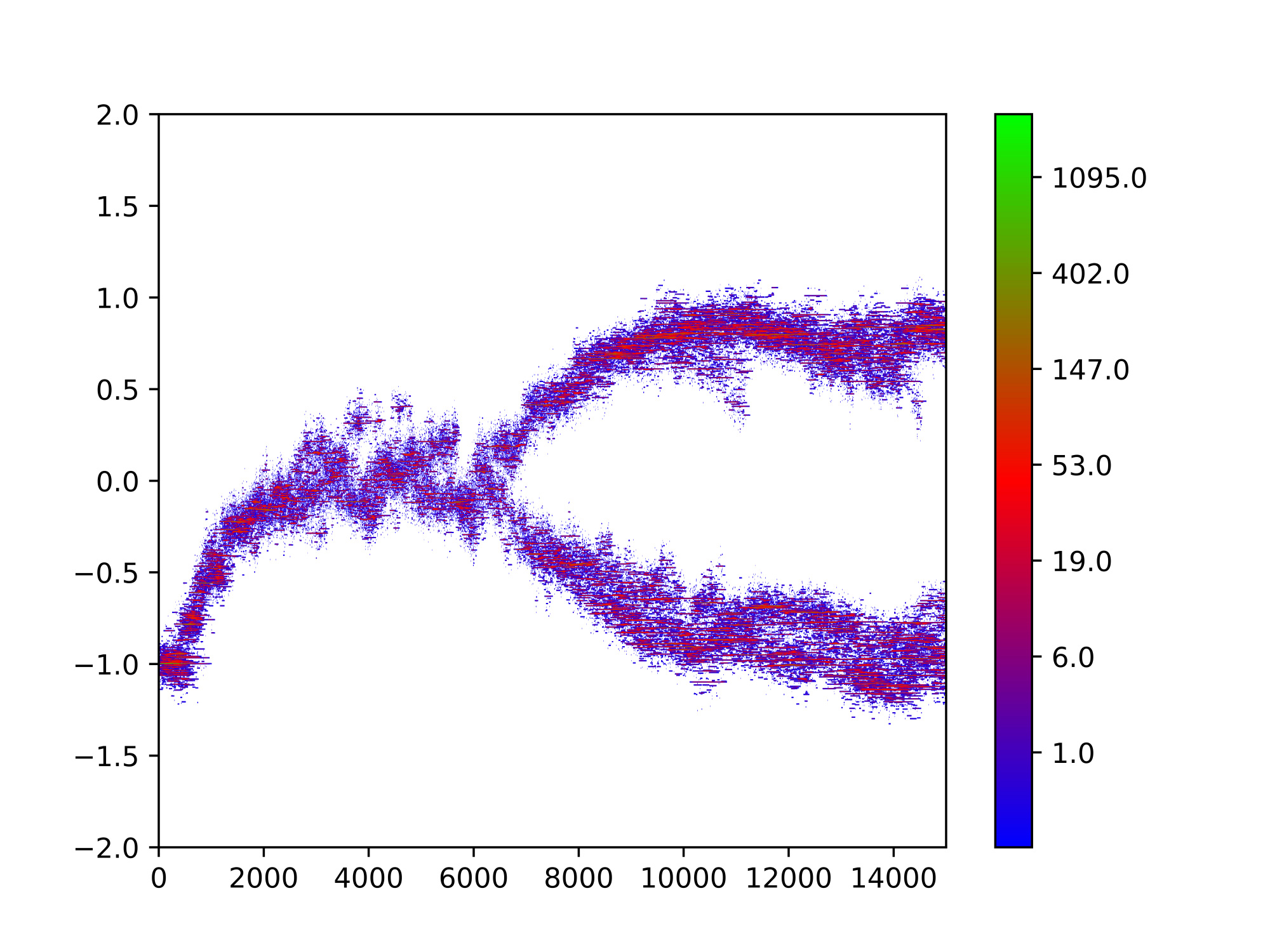}};
				\node[below=of img1, node distance=0cm, yshift=1.5cm, xshift=-0.5cm] {\footnotesize Time};
				\node[left=of img1, node distance=0cm, rotate=90, anchor=center,yshift=-1.5cm] {\footnotesize Trait};
			\end{tikzpicture}
			\subcaption{A simulation of the model with $r=2$, $\sigma_\alpha^2=2$, $\sigma_b^2=1.2$, $\sigma_p^2=2$, $\sigma=0.1$, $m=0.01$, $\rho^2=0.05$, $K=1000$.}
		\end{subfigure}
		\caption{Simulations showing the population size of each trait over time. The trait space is on the vertical line with the size of each trait at a given time being indicated by a colour corresponding to the scale next to the image. Initially, the population is composed of $250$ active individuals with trait $-1$. }
		\label{Fig: Expdormsims-r=2}
	\end{figure}

	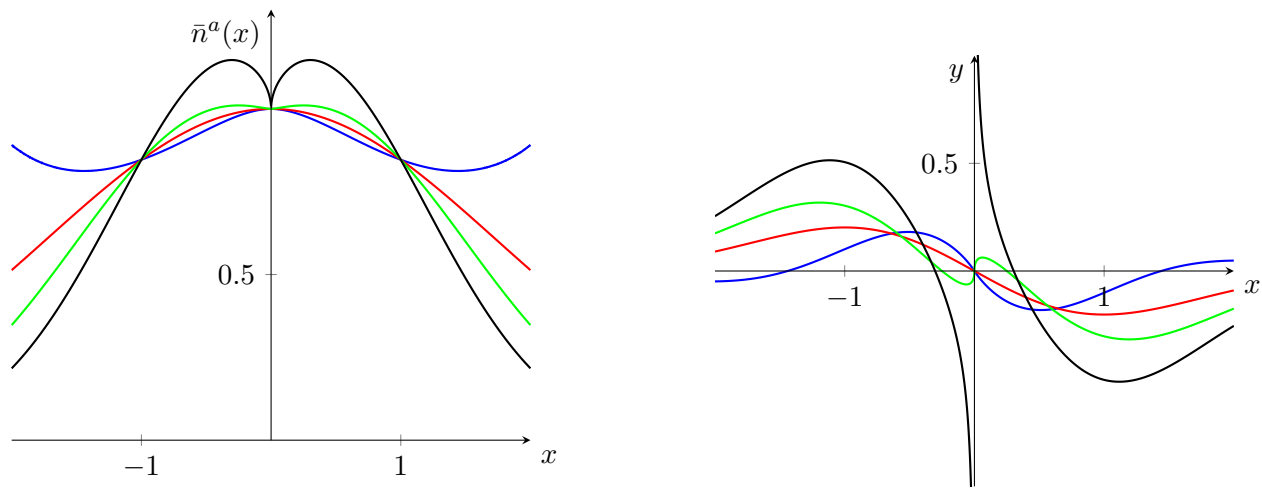
\begin{figure}
		\begin{subfigure}[t]{0.45\textwidth}
			\resizebox{\textwidth}{!}{
				\begin{tikzpicture}[declare function={a=0.5;b=3; c=1;f(\x)=exp(-\x^2/(2))/exp(-abs(\x)^2.5/(3));g(\x)=exp(-\x^2/(2))/exp(-abs(\x)^2/(3));h(\x)=exp(-\x^2/(2))/exp(-abs(\x)^1.5/(3));m(\x)=exp(-\x^2/(2))/exp(-abs(\x)^.5/(3));}]
					\begin{axis}[axis lines=middle,axis on top,xlabel=$x$,ylabel=$\bar{n}^a(x)$,
						xmin=-2,xmax=2,ymin=-0,ymax=1.3,ytick={0.5},yticklabels={$0.5$},
						xtick={-1,0,1}, xticklabels={$-1$,$0$,$1$},
						every axis x label/.style={at={(current axis.right of origin)},anchor=north west},
						every axis y label/.style={at={(current axis.above origin)},anchor=north east}]
						\addplot[name path=A,blue,thick,domain=-2:2,samples=2000] {f(x)};
						\addplot[name path=A,red,thick,domain=-2:2,samples=2000] {g(x)};
						\addplot[name path=A,green,thick,domain=-2:2,samples=2000] {h(x)};
						\addplot[name path=A,black,thick,domain=-2:2,samples=2000] {m(x)};
					\end{axis}
				\end{tikzpicture}
			}
			\subcaption{Examples for the equilibrium population sizes of a single trait in absence of competition for $\sigma_{b}^2=1$ and $\sigma_p^2=1.5$. The values of $r$ are $2.5$ (blue), $2$ (red), $1.5$ (green) and $0.5$ (black).}
			\label{Fig: Equilibrium}
		\end{subfigure}\hfill
		\begin{subfigure}[t]{0.45\textwidth}
			\resizebox{\textwidth}{!}{
				\begin{tikzpicture}[declare function={a=0.5;b=3; c=1;p(\x)=exp(-\x^2/(2))*((2.5*\x*abs(\x)^(2.5-2))/3-\x/1);g(\x)=exp(-\x^2/(2))*((2*\x*abs(\x)^(2-2))/3-\x/1);h(\x)=exp(-\x^2/(2))*((1.5*\x*abs(\x)^(1.5-2))/3-\x/1);m(\x)=exp(-\x^2/(2))*((0.5*\x*abs(\x)^(0.5-2))/3-\x/1);}]
					\begin{axis}[axis lines=middle,axis on top,xlabel=$x$,ylabel=$y$,
						xmin=-2,xmax=2,ymin=-1,ymax=1,ytick={0.5},yticklabels={$0.5$},
						xtick={-1,0,1}, xticklabels={$-1$,$0$,$1$},
						every axis x label/.style={at={(current axis.right of origin)},anchor=north west},
						every axis y label/.style={at={(current axis.above origin)},anchor=north east}]
						\addplot[name path=A,blue,thick,domain=-2:2,samples=2000] {p(x)};
						\addplot[name path=A,red,thick,domain=-2:2,samples=2000] {g(x)};
						\addplot[name path=A,green,thick,domain=-2:2,samples=2000] {h(x)};
						\addplot[name path=A,black,thick,domain=-2:-0.0001,samples=1000] {m(x)};
						\addplot[name path=A,black,thick,domain=0.0001:2,samples=1000] {m(x)};
					\end{axis}
				\end{tikzpicture}
			}
			\caption{Examples for the derivative of the fitness function $y=\partial_1f(x,x)$ for $\sigma_{b}^2=1$ and $\sigma_p^2=1.5$. The values of $r$ are $2.5$ (blue), $2$ (red), $1.5$ (green) and $0.5$ (black).}
			\label{Fig: Derivative}
		\end{subfigure}
		\caption{Equilibrium population sizes and gradients of the fitness function for various choices of $r$.}
	\end{figure}
	
%	Now we want to turn to parameters $r<2$. Firstly, we note that this changes the shape of the fitness landscape significantly. 

\begin{remark}[The impact of the parameter $r$]

We display  in Figure \ref{Fig: Equilibrium} a selection of fitness landscapes for different parameters of $r$. 
We see that in the case $r\geq 2$, the stable singularity $0$ is a local maximum of the equilibrium population. However, as soon as $r<2$, this former local maximum turns into a local minimum with two new local maxima emerging symmetrically around $0$. These are the new stable singularities that we computed previously. Regarding the stability, we see that for $r>2$, the singularity $0$ may not be globally stable in the sense that the solution of the canonical equation of adaptive dynamics converges to $0$ from any starting trait in the trait space. This effect becomes even more pronounced if we plot the partial derivative $\partial_1f(x,x)$ with the same parameters as is depicted in Figure \ref{Fig: Derivative}. However, in the critical case $r=2$, there seems to be a dichotomy between either an unstable singularity or a globally attracting singularity at $0$. For $r<2$, each of the two singularities are stable in the sense that there is convergence to the singularity which is closest to the starting trait. Heuristically, as long as $r\in(1,2)$, the population could cross the equilibrium $0$ as long as the driving forces of evolutionary branching are sufficiently strong, since there is only little resistance to overcome (boundedness of $\partial_{1}f(x,x)$ around $0$). However, if $r\leq 1$, then the derivative becomes unbounded and the fitness valley can only be crossed by a sufficiently large mutation.
	
	Already the fact that there are two stable singularities indicates that we should be able to observe another enrichment in species diversity. In fact, in our model mutations are normally distributed on the trait space, so even if there is no evolutionary branching in the sense of our criterion $\partial_{11}f(x^*,x^*)>0$ at the equilibrium $x^*$, we still can -- and eventually will -- obtain mutations which are closer to the second equilibrium. The traits at the two equilibria can coexist, since they have the same fitness but the competition may drive the branches further apart.
\end{remark}

	\begin{remark}[More general dormancy functions $p$]
 		 
It is easy to see that a dormancy mechanism with constant $\sigma$ and no death in the dormant state always extends the classical parameter range for evolutionary branching, if $p$ is convex around the singularity, since then the last term in \eqref{eq: second-deriv} is strictly positive.  
	In the setting described in \cite[Chapter 2]{D11}, the fitness function around a singularity can also be thought of as a trade-off function. The forces taken into account are the birth rate and the competition rates. Moving away from an optimal state requires the reduction in reproduction to be compensated by lower competition or vice versa an increase in competitive pressure to be compensated by increased birth rates. If $\partial_{11}f(x^*,x^*)>0$, then this shows a convex trade-off between the function governing births and the function determining competition around the singularity. By this we mean that the loss in reproduction is outweighed by a gain in reduced competition.
	Since dormancy may be seen as another decrease in competitive pressure by allowing individuals to escape into a dormant state, the shape of the dormancy initiation probability impacts the shape of the fitness function directly. If $p$ if convex, then it increases the curvature of $f$, while if it is concave, it decreases the curvature. However, this may still be difficult to investigate, since we can see from equation \eqref{eq: first-deriv} that $p$ also affects the position of the singularity. Suppose that $x^*$ is a singularity in the model without dormancy. The simplest comparison in this situation can be made, when we assume $p(x^*)$ to be a local minimum or maximum. Then the singularity is present in both settings and having a local minimum of $p$ at $x^*$ benefits evolutionary branching around this point, while a local maximum reduces the parameter range for branching.
	\end{remark}
	
\subsection{Further consequences of competition-induced dormancy.}	
%Das folgende liest sich fast eher wie eine discussion, oder eine Zusammenfassung...
Now we investigate some further aspects of dormancy (apart from the criterion for evolutionary branching that we determined in Section~\ref{sec-3.1}), which we will justify partially mathematically and partially only heuristically or via simulations.
	
{\bf Dormancy can increase the number of subsequent branchings.}	
  Note that our criterion for evolutionary branching only applies for the first branching since after that point, we no longer have the approximation of the process via the canonical equation. In order to investigate the impact of dormancy on subsequent branching events, we use the example parameters shown in Figure \ref{Fig: Expdormsims-r=2-dist}. We highlight that in both cases -- with and without dormancy -- the criterion for evolutionary branching is satisfied for the evolutionary singularity $0$. However, in the setting with dormancy, an additional branch emerges compared to the simulation with identical parameters but without dormancy. This shows that dormancy may also favour evolutionary branching in a setting which we cannot describe accurately with our approach.

\begin{figure}
	\begin{subfigure}[t]{0.45\textwidth}
		\hspace{-0.75cm}\begin{tikzpicture}
			\node (img1)  {\includegraphics[width=1.2\textwidth]{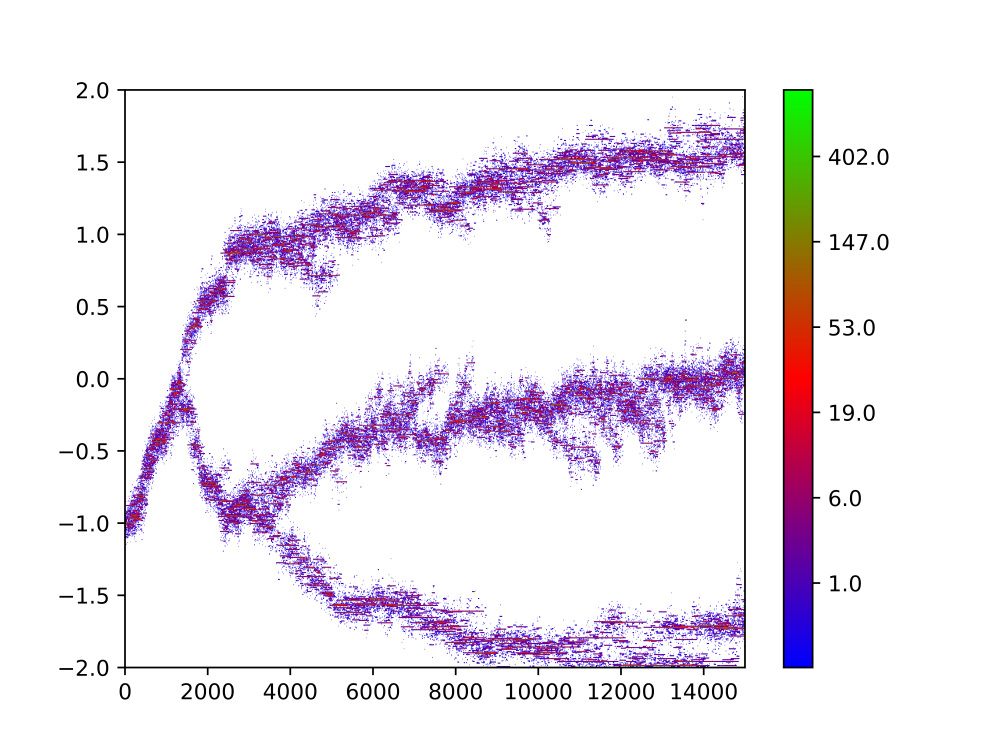}};
			\node[below=of img1, node distance=0cm, yshift=1.5cm, xshift=-0.5cm] {\footnotesize Time};
			\node[left=of img1, node distance=0cm, rotate=90, anchor=center,yshift=-1.5cm] {\footnotesize Trait};
		\end{tikzpicture}
		\subcaption{A simulation of the model with $r=2$, $\sigma_\alpha^2=1$, $\sigma_b^2=1.5$, $\sigma_p^2=3$, $\sigma=1$, $m=0.01$, $\rho^2=0.05$, $K=1000$.}
	\end{subfigure}\hspace{0.7cm}
	\begin{subfigure}[t]{0.45\textwidth}
		\hspace{-0.5cm}\begin{tikzpicture}
			\node (img1)  {\includegraphics[width=1.2\textwidth]{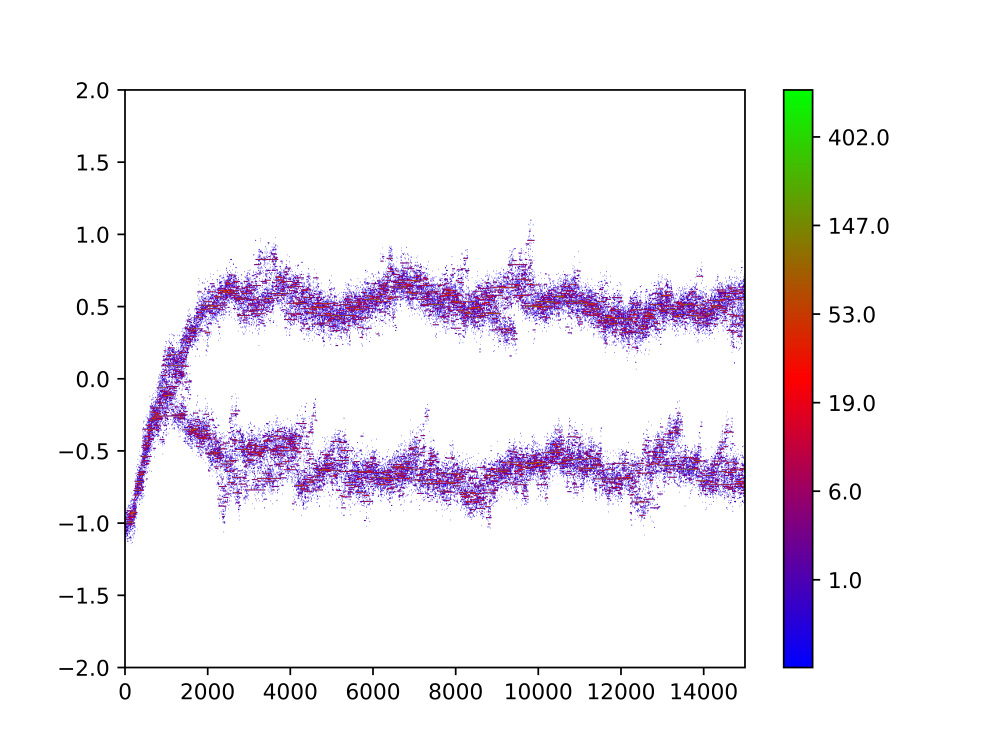}};
			\node[below=of img1, node distance=0cm, yshift=1.5cm, xshift=-0.5cm] {\footnotesize Time};
			\node[left=of img1, node distance=0cm, rotate=90, anchor=center,yshift=-1.5cm] {\footnotesize Trait};
		\end{tikzpicture}
		\subcaption{A simulation of the model without the dormancy mechanism and $\sigma_\alpha^2=2$, $\sigma_b^2=1.5$, $m=0.01$, $\rho^2=0.05$, $K=1000$.}
	\end{subfigure}
	\caption{Simulations showing the population size of each trait over time. Initially, the population is composed of $250$ active individuals with trait $-1$. The distance between the branches before secondary branchings is significantly increased on the left.}
	\label{Fig: Expdormsims-r=2-dist}
\end{figure}
	
{\bf Dormancy can both decrease and increase the speed of adaptation.}	
%	There are two further effects which we can quantify: the speed of adaptation 
	%	  until the first branching and the distance of the branches after the first branching. 
	To investigate the speed of adaptation (that is, the speed with which sub-populations reach equilibria in trait space) we focus on the case $r=2$. To this aim, we compare the transition rates of the PES with and without dormancy. Suppose that in both models all rates are equal except for the additional dormancy mechanism and that the population is monomorphic with resident trait $x\in\Xcal$. Then the rate at which a mutant trait $y$ invades is given by \[
	\lambda(x)m(x)\cdot\frac{\lambda(x)-\mu(x)}{\alpha(x,x)}\cdot (1-q(y,x)) M(x,y-x)
	\]
	in the model without dormancy, where $q(y,x)$ is the probability of extinction of a single individual of trait $y$ against trait $x$ and $M(x,y-x)$ is the mutation kernel governing the probability of obtaining trait $y$ from a mutant offspring of trait $x$. In the setting with dormancy where mutations only result from births, we obtain \[
	\lambda(x)m(x)\cdot\frac{\lambda(x)-\mu(x)}{\alpha(x,x)(1-p(x))}\cdot (1-q_a(y,x)) M((x,a),y-x).
	\]
	as the rate of invasion of a mutant trait $y$.
	In our concrete example, recall $\mu(x)=0$ and $\alpha(x,x)=1$. Since the rate of birth, mutation probability and mutation kernel are assumed to be equal, we need to investigate the relationship between \[
	s(y,x,a)=\frac{\lambda(x)-\mu(x)}{\alpha(x,x)}\cdot (1-q(y,x))=\lambda(x)-\frac{\alpha(y,x)\lambda^2(x)}{\lambda(y)}\]
	and 
	\[ s(y,x,d)=\frac{\lambda(x)-\mu(x)}{\alpha(x,x)(1-p(x))}\cdot (1-q_a(y,x))=\frac{\lambda(x)}{1-p(x)}-\frac{\alpha(y,x)\lambda^2(x)(1-p(y))}{\lambda(y)(1-p(x))^2}.
	\]
	Plugging in our rates and letting $y=x+\varepsilon$ yields \[
	\frac{s(x+\varepsilon,x,d)}{s(x+\varepsilon,x,a)}=e^{x^2/(2\sigma_p^2)}\cdot\frac{1-\exp(-\frac{\varepsilon^2}{2\sigma_\alpha^2}-\frac{x^2}{2\sigma_b^2}+\frac{(x+\varepsilon)^2}{2\sigma_b^2}+\frac{x^2}{2\sigma_p^2}-\frac{(x+\varepsilon)^2}{2\sigma_p^2})}{1-\exp(-\frac{\varepsilon^2}{2\sigma_\alpha^2}-\frac{x^2}{2\sigma_b^2}+\frac{(x+\varepsilon)^2}{2\sigma_b^2})}\xrightarrow{\varepsilon\to 0} e^{x^2/(2\sigma_p^2)}\left(1-\frac{\sigma_b^2}{\sigma_p^2}\right).
	\]
	In particular, in the small mutation limit the rate of evolution is faster in the model with dormancy if and only if \[
	1-\frac{\sigma_{b}^2}{\sigma_p^2}>e^{-x^2/(2\sigma_p^2)}\quad\Longleftrightarrow\quad |x|>\sqrt{2\sigma_p^2\log\left(\frac{\sigma_p^2}{\sigma_p^2-\sigma_b^2}\right)}.
	\]
	Hence, close to the singularity $0$, the introduction of our dormancy mechanism slows down the evolutionary process. This can also be seen in Figure \ref{Fig: Expdormsims-r=2-dist}. The forces acting against each other are an increased rate of mutation from a larger equilibrium population size and the fact that a larger population size decreases the probability of a successful invasion due to increased competition. Far from the optimal population size, more frequent mutations outweigh more competition since population sizes are small.
		
{\bf Dormancy can increase the diversity in trait space.}		
 If branches are pushed further apart, we consider this as an `increase in diversity' in trait space.
 While we have seen that the introduction of dormancy may facilitate additional branching, we will investigate the location of the branches after the first branching event and before any further branchings have occurred. The simplest case is given by symmetric branching around the singularity $0$, so we will assume the parameters to be as in Figure \ref{Fig: Expdormsims-r=2-dist}. Note that we can assume symmetric branching since all of the functions are symmetric around $0$. Let us assume that the population in the rare mutation limit only consists of the two coexisting traits $x_1>0>x_2$. Due to symmetry, we expect $x:=x_1=-x_2$. One could then ask about the distance between the coexisting traits after they have settled into their equilibrium. For this, we consider the dynamical system \eqref{eq: dynamical system} where we replace the trait $y$ by $-x$.
	Then, using the symmetry of the involved functions and the traits as well as $\alpha(x,x)=1$, we find that the non-trivial equilibrium population satisfies \[
	n_1^a=n_2^a=\frac{\lambda(x)-\mu(x)}{(1+\alpha(-x,x))(1-p(x))}\quad\text{and}\quad n_1^d=n_2^d=\frac{(\lambda(x)-\mu(x))^2 p(x)}{\sigma(x)(1+\alpha(-x,x))(1-p(x))^2}.
	\]
	In our case, we obtain \[
	n_1^a=n_2^a=\frac{e^{-x^2/(2\sigma_b^2)+x^2/(2\sigma_p^2)}}{1+e^{-(2x)^2/(2\sigma_{\alpha}^2)}}\quad\text{and}\quad n_1^d=n_2^d=\frac{(e^{-x^2/(2\sigma_b^2)})^2(1-e^{-x^2/(2\sigma_p^2)})}{\sigma(1+e^{-(2x)^2/(2\sigma_{\alpha}^2)})(e^{-x^2/(2\sigma_p^2)})^2}.
	\]
	
	Now, an invading mutant trait $y$ is successful against this coexistence if and only if its invasion fitness is positive. We know from \cite[Section 2, Appendix A]{BPT21} that the invasion fitness in our setting can be calculated by the formula
	\begin{align*}
		h(y,x)=\frac{\rho(y,x)-\sigma+\sqrt{(\rho(y,x)+\sigma)^2+4\sigma p(y)(\alpha(y,x)+\alpha(y,-x))n_1^a}}{2}
	\end{align*}
	where $\rho(y,x)=\lambda(y)-(\alpha(y,x)+\alpha(y,-x))n_1^a$. This follows from the death rate in the approximating branching process being $d=\mu(y)+\alpha(y,x)n_1^a+\alpha(y,-x)n_2^a$. Now, we consider the invading trait to be close to $x>0$, say $y=x+\varepsilon$ with $\varepsilon>0$. Then the first positive zero of $h(x+\varepsilon,x)=\tilde{h}_\varepsilon(x)$ only depends on $\varepsilon$. Denote this zero by $z(\varepsilon)$. As the radius of mutation tends to zero, the fitness of a mutant trait can be approximated by $\tilde{h}_\varepsilon(x)$ with $\varepsilon\to 0$. In particular, the point at which there are no further successful invasions is given by $z=\lim_{\varepsilon\to 0}z(\varepsilon)$. Unfortunately, we cannot hope for a simple explicit form of $z$, as the zeros of $h$ or $\tilde{h}_\varepsilon$ respectively are not easy to compute. However, we can numerically approximate these endpoints of the branching. Similar computations have been conducted by \cite{SMJV13} in a related model without dormancy. Due to a simpler form of the fitness function, they were able to find an explicit formula.
	
	One could also expect the traits to optimize over time, such that the active population is maximised. In other words, the population will evolve towards the trait $x$ and $-x$ respectively for which the maximum of $n_1^a$ is achieved. In fact, this is the first point where \[
	\frac{d}{d x}n_1^a=0.
	\]
	However, while this idea is indeed correct in a population consisting of a single trait, as it is known that the invasion fitness of populations with competition induced dormancy is positive if and only if the active equilibrium population size is larger (cf. \cite[Eq 2.5]{BT20}), this is not the case in a polymorphic population. For the example presented in Figure \ref{Fig: Expdormsims-r=2-dist} we calculate the position of the traits before any secondary branchings to be $x\approx 0.897$ while the approach via maximising the active population size would yield $x\approx 1.10$. This shows that the coexistence of traits prevents the population to reach its maximal resource efficiency. Similarly, we calculate the position of the branches for the same parameters without dormancy to be $x\approx 0.589$. Hence, if one interprets the trait space as genetic distance, then dormancy may lead to an increase in this distance. Hence, if speciation only occurs when a certain threshold is surpassed, then dormancy can contribute to speciation in this sense.

\begin{figure}[htbp]
\begin{subfigure}[t]{0.45\textwidth}
	\hspace{-0.75cm}\begin{tikzpicture}
		\node (img1)  {\includegraphics[width=1.2\textwidth]{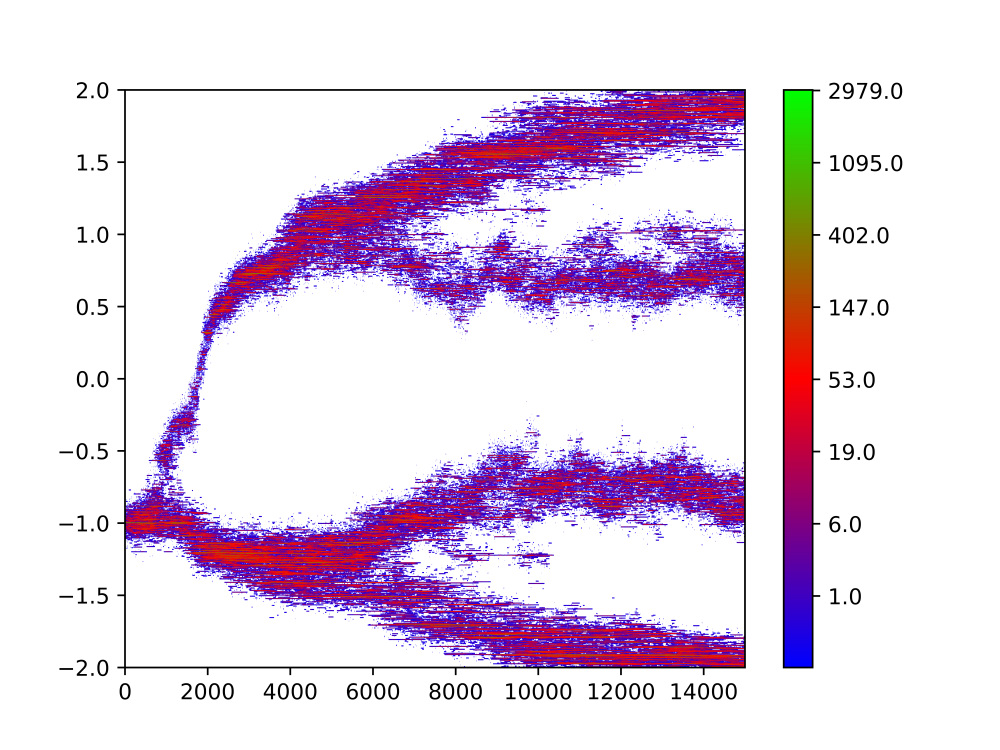}};
		\node[below=of img1, node distance=0cm, yshift=1.5cm, xshift=-0.5cm] {\footnotesize Time};
		\node[left=of img1, node distance=0cm, rotate=90, anchor=center,yshift=-1.5cm] {\footnotesize Trait};
	\end{tikzpicture}
	\subcaption{A simulation of the model with $r=1.5$, $\sigma_\alpha^2=1$, $\sigma_b^2=1.8$, $\sigma_p^2=1.7$, $\sigma=0.1$, $m=0.01$, $\rho^2=0.05$, $K=1000$.}
\end{subfigure}\hspace{0.7cm}
\begin{subfigure}[t]{0.45\textwidth}
	\hspace{-0.5cm}\begin{tikzpicture}
		\node (img1)  {\includegraphics[width=1.2\textwidth]{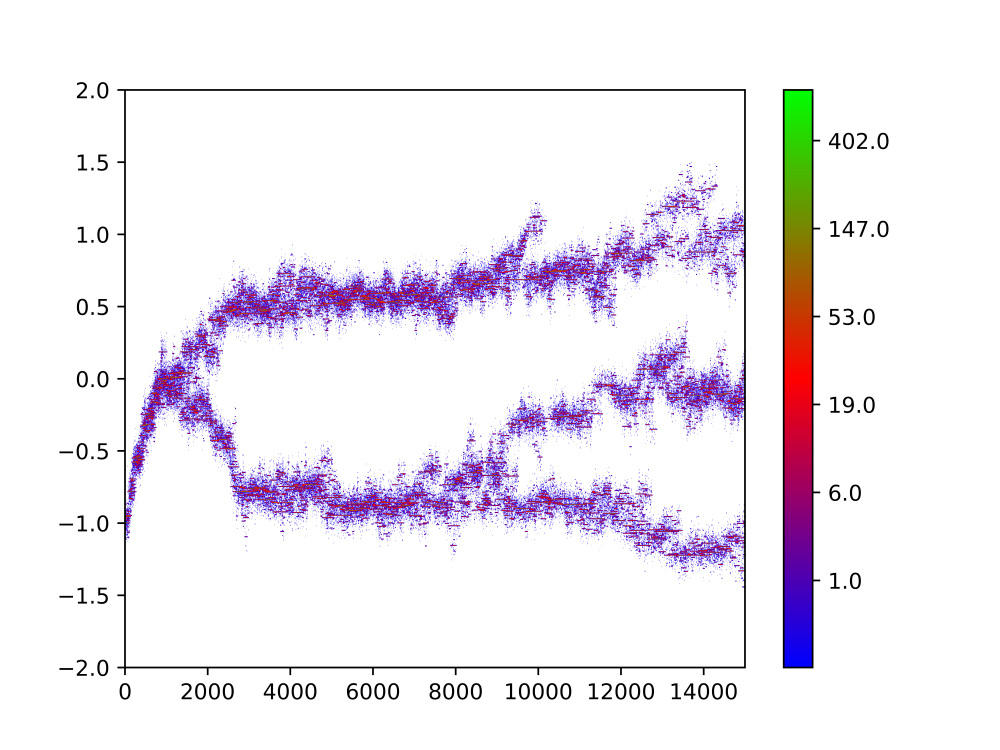}};
		\node[below=of img1, node distance=0cm, yshift=1.5cm, xshift=-0.5cm] {\footnotesize Time};
		\node[left=of img1, node distance=0cm, rotate=90, anchor=center,yshift=-1.5cm] {\footnotesize Trait};
	\end{tikzpicture}
	\subcaption{A simulation of the model without the dormancy mechanism and $\sigma_\alpha^2=2$, $\sigma_b^2=1.5$, $m=0.01$, $\rho^2=0.05$, $K=1000$.}
\end{subfigure}
	\caption{Simulations showing the population size of each trait over time. Initially, the population is composed of $250$ active individuals with trait $-1$. The branches with dormancy appear thicker and additional branches are produced.}
	\label{Fig: Expdormsims-r=15}
\end{figure}

{\bf Dormancy increases the carrying capacity of sub-populations.}	
 We also compare the total population size with the example given in Figure \ref{Fig: Expdormsims-r=15}: when dormancy is involved, the environment can support significantly more individuals as is seen from the colour scale indicating the current size of the trait. This is due to the fact that dormant individuals do not use any resources. Another qualitative feature of these simulations is the thickness of each branch. We may think of the branches as different subspecies, the thickness of a branch indicates the genetic variation within a given subspecies or the width of the niche the given subspecies occupies. Here, the branches with dormancy occupy a larger range of the trait space, while the ones without dormancy appear narrow and light. This observation can also be made in Figure \ref{Fig: Expdormsims-r=2}. Note that this stems from a slow rate of resuscitation (on average once per ten units of time). Hence the population experiences less exposure to stochastic fluctuations in the trait. Mutations which are not necessarily advantageous remain in the population for longer due to the escape into dormancy. When the resuscitation rate is increased, this effect is reduced as can be seen in Figure \ref{Fig: Expdormsims-r=2-dist}. We can only confirm our observations for finite populations, since we cannot calculate the coexistence equilibria for an arbitrary number of traits and hence cannot give an exact simulation of the PES. However, it may be possible that even in the limit $K\to\infty$ we see wider ranges for coexistence.

{\bf Dormancy can create alternative paths to dimorphic populations.}		
	In the stochastic system, it may happen by chance that some individuals stay dormant for long enough until the general population sits at a large distance. When these individuals wake up, they may experience little competition and survive to create a new branch. We call this effect ``tunnelling'' and an example of this effect is displayed in Figure \ref{Fig: Tunnelsim}.
	
	\begin{figure}
		\begin{tikzpicture}
			\node (img1)  {\includegraphics[width=0.6\textwidth]{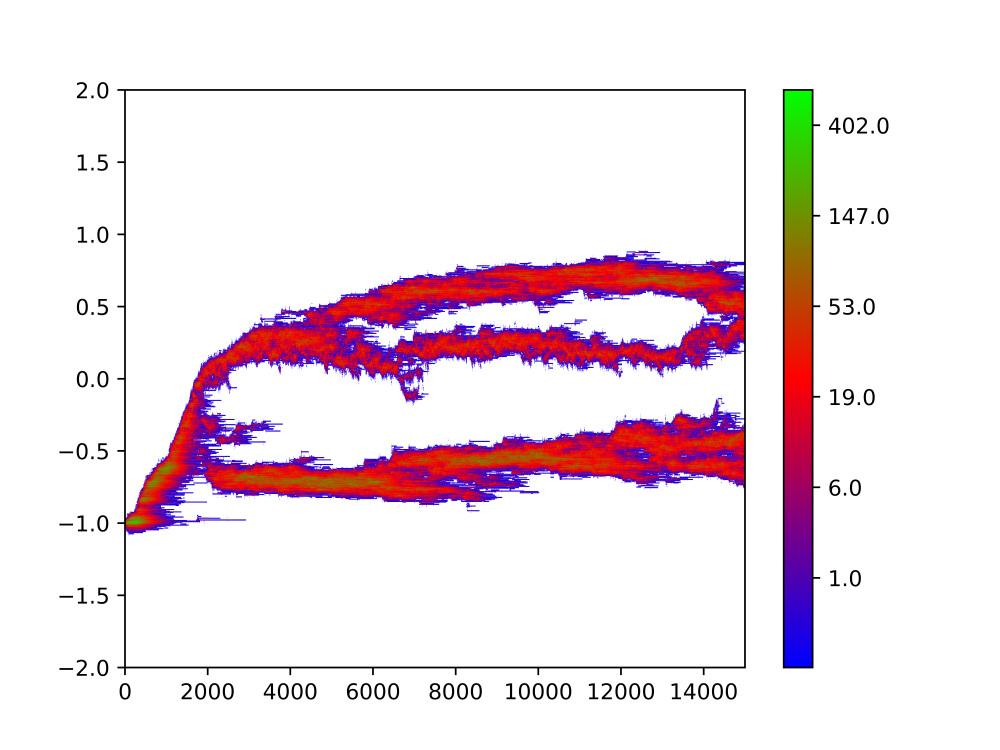}};
			\node[below=of img1, node distance=0cm, yshift=1.5cm, xshift=-0.5cm] {\footnotesize Time};
			\node[left=of img1, node distance=0cm, rotate=90, anchor=center,yshift=-1.5cm] {\footnotesize Trait};
		\end{tikzpicture}
		\caption{Simulation showing the population size of each trait over time. Initially, the population is composed of $250$ active individuals with trait $-1$. The parameters are $r=2$, $\sigma_b^2=0.7$, $\sigma_{\alpha}^2=0.49$, $\sigma_p^2=5$, $m=0.1$, $\sigma=0.01$, $\rho^2=0.01$ and $K=1000$. }
		\label{Fig: Tunnelsim}
	\end{figure}
	
	 Note however that this is not possible when we consider the scaling of $K\to\infty$ in the setting of the polymorphic evolution sequence because mutations are rare. In fact, the rarity of mutations prevents the survival of any dormant individuals against the dominating trait unless they can coexist. Hence, this is a purely stochastic effect. For one, we require a dormant individual to remain dormant for a sufficient period of time, which only depends on $\sigma$. Secondly, we require this individual to have a positive fitness at the time of resuscitation, which depends on the birth rate $\lambda$ and the competition experienced. The latter implicitly depends on the mutation parameter $m$, since more mutations allow the population to evolve faster into the reproduction optimum. Suppose that the population has evolved towards its reproduction optimum $0$ and a resuscitating ``mutant'' trait $y$ enters the active population. For convenience, we will assume that the trait $0$ has been adopted by the population and hence the equilibrium population size of a monomorphic population is assumed. Then, the invasion fitness of trait $y$ against trait $0$ is positive if and only if the branching criterion is satisfied. Hence, this mechanism of branching does not allow a larger range of parameters to observe a splitting of the population. In fact, we expect the two branches of the population to end up in the same regions of the trait space as the branches from evolutionary branching around $0$ would. However, it significantly changes the history of the population and it may allow for temporary additional branches as can be seen in Figure \ref{Fig: Tunnelsim}. While the behaviour in the stochastic setting after resuscitation of a subpopulation by tunnelling is difficult to understand, we have not observed a similar structure in simulations without such a tunnelling effect. A similar observation regarding the effects of stochasticity on qualitative aspects of evolutionary dynamics has been made by \cite{WI13}. In their model, they found that evolutionary branching only occurs for sufficiently large population sizes. This contrasts our result, since we require sufficiently small populations, but also demonstrates the importance of finite population sizes when analysing the models.

	\section*{Acknowledgements}
	This work was partially funded by the Deutsche Forschungsgemeinschaft (DFG, German Research Foundation) under Germany’s Excellence Strategy MATH+: The Berlin Mathematics Research Center, EXC-2046/1 project ID EF 4-7

\end{document}